\theoremstyle{plain}
\newtheorem{proposition}[theorem]{Proposition}
\DeclareMathOperator{\lcm}{lcm}
\newcommand{\wt}{\widetilde}
\newcommand{\A}{\mathcal{A}}
\newcommand{\Z}{\mathbb{Z}}
\newcommand{\N}{\mathbb{N}}
\newcommand{\Q}{\mathbb{Q}}
\newcommand{\F}{\mathcal{F}}
\newcommand{\LL}{\mathcal{L}}
\newcommand{\SL}{\mathrm{SL}(2,\Z)}
\renewcommand{\phi}{\varphi}
\newcounter{thm1}
\newcounter{thm2}
\newcounter{thm3}
\newcounter{thm4}
\newcounter{thmold}
\title{Vector Reachability Problem in $\SL$ \footnote{This work was partially 
supported by EPSRC grant ``Reachability problems for words, matrices and maps'' (EP/M00077X/1).}}
\author[1]{Igor Potapov}
\author[2]{Pavel Semukhin}
\affil[1]{Department of Computer Science, University of Liverpool, United Kingdom\\
  \texttt{potapov@liverpool.ac.uk}}
\affil[2]{Department of Computer Science, University of Liverpool, United Kingdom\\
  \texttt{semukhin@liverpool.ac.uk}}
\authorrunning{I. Potapov and P. Semukhin} 
\subjclass{F.2.1 Numerical Algorithms and Problems, F.1.1 Models of Computation}
\keywords{Decidability, Matrix Semigroup, Vector Reachability Problem, Special Linear Group, Linear Fractional Transformation, Automata and Formal Languages}
\begin{document}

\maketitle

\begin{abstract}
The decision problems on matrices were intensively studied for many decades as
matrix products play an essential role in the representation of various
computational processes.
However, many computational problems
for matrix semigroups are inherently difficult to solve even for problems
in low dimensions and most matrix semigroup problems become undecidable
in general starting from dimension three or four.

This paper solves two open problems about the decidability
of the vector reachability problem over a finitely generated semigroup of matrices
from $\SL$ and the point to point reachability (over rational numbers)
for fractional linear transformations, where associated matrices
are from $\SL$. The approach to solving reachability problems is
based on the characterization of reachability paths between points which is followed
by the translation of numerical problems on matrices into computational
and combinatorial problems on words and formal languages.
We also give a geometric interpretation of reachability paths and extend the
decidability results to matrix products represented by arbitrary
labelled directed graphs.
Finally, we will use this technique to prove that a special case of the scalar reachability problem is decidable.
 \end{abstract}

\section{Introduction}
Decision problems on matrices were intensively studied from 1947 
when A. Markov showed the connection between classical computations and
problems for matrix semigroups \cite{Markov}.
Moreover matrix products play an essential role in the representation of
various computational processes, i.e., linear recurrent sequences \cite{tHaHaHiKa05a,OW_ICALP2015-1,OW_ICALP2015-2}, arithmetic circuits \cite{GOW_STACS2015}, 
hybrid and dynamical systems \cite{OSW2015, BP2008},
probabilistic and quantum automata \cite{Blondel2005}, stochastic
games, broadcast protocols \cite{EFM1999},  optical systems, etc.  New algorithms for solving reachability problems in matrix semigroups 
can be incorporated into software verification tools and used for analysis of 
mathematical models in physics, chemistry, biology, ecology, and economics.

However, many computational problems for matrix semigroups  are inherently difficult to solve
even when the problems are considered in dimension two, and most of these problems 
become undecidable in general starting from dimension three or four.
Examples of such problems are the Membership problem (including the special cases of the Mortality and Identity problems), vector reachability, scalar reachability, freeness problem and the emptiness problem of matrix semigroups intersection \cite{BP2012}. 
All above problems are tightly connected, including three central problems: 

\begin{itemize}
\item  {\bf The membership problem:} 
Let $S=\langle G\rangle$ be a 
semigroup generated by a finite set $G$ of ${n \times n}$ matrices. 
Determine whether a given matrix $M$ belongs to $S$, that is, 
determine whether there exists a sequence of matrices ${M_1,M_2, \ldots ,M_k}$ in $G$
such that $M={M_1\cdot M_2 \cdot \ldots \cdot M_k}$

\item
{\bf The vector reachability problem:}
Let $\mathbf{x}$ and $\mathbf{y}$ be two vectors and $S$ be a given finitely generated 
semigroup of ${n \times n}$ matrices.
Determine whether there is a matrix $M \in S$ such that $M\mathbf{x} = \mathbf{y}$.

\item
{\bf The scalar reachability problem:}
Let $\mathbf{x}$ and $\mathbf{y}$ be two vectors, $\lambda$ be a scalar, and $S$ be a given finitely generated 
semigroup of ${n \times n}$ matrices.
Determine whether there is a matrix $M \in S$ such that $\mathbf{x}^\top M\mathbf{y} = \lambda$.
\end{itemize}

The vector reachability problem can be seen as a parameterized version of the membership problem, where
some elements of a matrix $M$  are either independent variables or variables linked by some equations.
In contrast to the original membership problem, where all values of $M$ are
constants, in vector reachability we may have an infinite set of matrices that can transform a vector
$\mathbf{x}$ to $\mathbf{y}$. Thus the decidability results for the membership cannot be directly applied
to the vector reachability problem.

The scalar reachability can be viewed as a vector to hyperplane reachability problem. Indeed, we can rewrite the equation $\mathbf{x}^\top M\mathbf{y} = \lambda$ as a system of two equations: $M\mathbf{y} = \mathbf{z}$ and $\mathbf{x}^\top \mathbf{z} = \lambda$. So, the question becomes if there is a matrix $M\in S$ that maps a given vector $\mathbf{y}$ to a vector $\mathbf{z}$ that lies on a hyperplane $\mathbf{x}^\top \mathbf{z} = \lambda$. Because there are infinitely many vectors on a hyperplane, decidability of the scalar reachability problem does not follow directly from the decidability of the vector reachability problem.

Most of the problems such as membership, vector
reachability and freeness are undecidable for $3 \times 3$ integer matrices.
The undecidability proofs in matrix semigroups are mainly based on 
various techniques and methods of embedding
universal computations into three and four dimensional matrices and their products.
The case of dimension two is the most intriguing one
since there is some evidence that if these problems are undecidable,
then this cannot be proved using a construction similar to the
one used for dimensions 3 and 4. In particular, there is no injective
semigroup morphism from pairs of words over any finite
alphabet (with at least two elements) into complex $2 \times 2$ matrices
\cite{CHK99}, which means that the coding of independent pairs of words
in $2 \times 2$ complex matrices is impossible and the exact encoding of
the Post Correspondence Problem or a computation of a Turing
Machine cannot be used directly for proving undecidability
in $2 \times 2$ matrix semigroups over $\mathbb{Z}$, $\mathbb{Q}$ or $\mathbb{C}$.
The only undecidability result in dimension two for the vector reachability and the membership
problems has been shown in the case of  $2 \times 2$ matrices
over quaternions~\cite{BP_IC2008}.

The main hypothesis is that problems for ${2 \times 2}$ matrix semigroups 
over integers, rationals or complex  numbers could be decidable,
but not much is known about the status of these problems.
Recently, there was some progress on the {\sl Membership problem}, which was shown to be decidable
in $\SL$, and the {\sl Identity problem}, which was shown to be decidable in $\Z^{2\times 2}$~\cite{CK2005}. 
Later the decidability of the {\sl Freeness  problem} 
(that is, to decide whether each element can be expressed uniquely as a
product of generating matrices) was shown for $\SL$ \cite{CN2012}. 
On the other hand, the Mortality, Identity and vector reachability problems were
 shown to be at least NP-hard for $\SL$ in
\cite{BHP2012, BP2012}, but for the modular group
the membership was shown to be decidable in polynomial time 
by Gurevich and Schupp \cite{Gurevich2007}.

The algorithmic properties of $\SL$ are important in the context of many fundamental problems
in hyperbolic geometry \cite{Zagier2008,Chamizo,Elstrodt}, dynamical systems \cite{PR2004},
Lorenz/modular knots \cite{Mackenzie_2009}, braid groups \cite{Potapov13},
particle physics, high energy physics \cite{EDWARD_WITTEN},
M/string theories \cite{GM2011}, ray tracing analysis, music theory \cite{Noll}
and can lead to further decidability results in $\Z^{2\times 2}$ using matrix presentation
in the Smith normal form. 

This paper solves two open problems about the decidability of the vector reachability problem 
for finitely generated semigroups of matrices from $\SL$ and the point to point reachability (over rational numbers) 
for fractional linear transformations $f_M(x) = \frac{ax+b}{cx+d}\,$, where the associated matrix
 $M={\begin{bmatrix} a & b\\ c & d \end{bmatrix}}$ belongs to $\SL$.
The approach to solving these reachability problems for $2 \times 2$ matrix semigroups 
is based on the analysis of reachability paths between vectors or points. This analysis is then used to
translate the numerical reachability problems into computational
problems on words and regular languages. 
We also present several extensions of our main results, give a geometric interpretation of reachability paths, and use this technique to solve a special case of the scalar reachability problem.

The decidability proof of the vector reachability problem in dimension two 
presented in this paper is the first nontrivial new result 
for solving vector reachability problems since 1996 when it was shown that the problem 
is decidable for any commutative matrix semigroup in any dimension \cite{Babai} and for a special case of non-commuting matices \cite{LP2004}. On the other hand, in the general case of non-commuting matrices 
the problem is known to be undecidable already for integer matrices in dimension three \cite{HHH2007}.

\section{Preliminaries}

The integers and rationals are denoted by $\Z$ and $\Q$, respectively, and $\SL$ is a group of $2\times 2$ integer matrices with determinant~1. The notation $a \mid b$ means that $a$ divides $b$, and $a\nmid b$ means that $a$ does not divide $b$, when $a$ and $b$ are integer numbers.

\begin{definition}
With each matrix $M=\begin{bmatrix} a & b\\ c & d \end{bmatrix}\in \SL$ we associate a fractional linear map (also called M\"obius transformation) $f_M:\Q\to \Q$ defined as
$
f_M(x) = \frac{ax+b}{cx+d}\,.
$
This definition can be extended from $\Q$ to $\Q\cup \{\infty\}$ in a natural way by setting $f_M(\infty)=\frac{a}{c}$ if $c\neq 0$, $f_M(\infty)=\infty$ if $c=0$, and $f_M(x)=\infty$ if $cx+d=0$.

Note that we have $f_{M_1}\circ f_{M_2} = f_{M_1M_2}$ for any matrices $M_1$ and $M_2$.
\end{definition}

Let $M_1,\ldots,M_n$ be a finite collection of matrices. Then $\langle M_1,\ldots,M_n\rangle$ denotes the multiplicative semigroup (including the identity matrix) generated by $M_1,\ldots,M_n$.

\begin{definition}
The \emph{vector reachability problem} in $\SL$ is defined as follows: Given two vectors $\mathbf{x}$ and $\mathbf{y}$ with integer coefficients and a finite collection of matrices $M_1,\ldots,M_n$ from $\SL$, decide whether there exists a matrix $M\in \langle M_1,\ldots,M_n\rangle$ such that $M\mathbf{x}=\mathbf{y}$.
\end{definition}

\begin{definition}
The \emph{reachability problem by fractional linear transformations} in $\SL$ is defined as follows: Given two rational numbers $x$ and $y$ and a finite collection of matrices $M_1,\ldots,M_n$ from $\SL$, decide whether there exists a matrix $M\in \langle M_1,\ldots,M_n\rangle$ such that $f_M(x)=y$.
\end{definition}

The main result of our paper is that the vector reachability problem and the reachability problem by fractional linear transformations for $\SL$ are decidable (Theorem \ref{vrpthm}). Both proofs follow the same pattern. We will use the fact that any matrix $M$ from $\SL$ can be expressed as product of matrices $S=\begin{bmatrix} 0 & -1\\ 1 & 0 \end{bmatrix}$ and $R=\begin{bmatrix} 0 & -1\\ 1 & 1 \end{bmatrix}$. So we can represent any $M\in\SL$ by a word $w$ in the alphabet $\Sigma=\{S,R\}$.

The main idea of both proofs is to show that the solution set of the equation $M\mathbf{x}=\mathbf{y}$ or $f_M(x)=y$ has a form
$
\bigg\{B\begin{bmatrix} 1 & 1\\ 0 & 1 \end{bmatrix}^tC\ :\ t\in \Z\bigg\}, \text{ where } B,C \text{ are matrices from } \SL,
$
or a union of two such set, see Theorems \ref{regmatthm} and \ref{mobiusthm}, respectively. After translating matrices into words, these sets become regular languages. On the other hand, the language that corresponds to the semigroup $\langle M_1,\ldots,M_n\rangle$ is also regular. Indeed, if $M_i$ corresponds to the word $w_i$, then the semigroup $\langle M_1,\ldots,M_n\rangle$ translates into the language ${(w_1+\cdots+w_n)}^*$. The last step of the proof is to show that the emptiness problem of the intersection of two such languages is decidable (Proposition \ref{autom}).

Here is a more detailed description of our proofs. Let $M=\begin{bmatrix} a & b\\ c & d\end{bmatrix}$, $\mathbf{x}=\begin{bmatrix}x_1\\ x_2\end{bmatrix}$ and $\mathbf{y}=\begin{bmatrix}y_1\\ y_2\end{bmatrix}$. To show that the equation $M\mathbf{x}=\mathbf{y}$ defines a regular language we must solve the following system of three equations in four unknown variables:
\begin{align*}
 x_1a+x_2b = y_1&&
 x_1c+x_2d = y_2&&
 ad-bc = 1
\end{align*}

Choosing $b$ as a free parameter, we can reduce it to the following system of linear congruence equations:
\begin{align*}
  x_2b \equiv y_1\ (\bmod\ x_1)&&
  y_2b \equiv -x_1\ (\bmod\ y_1)&&
  x_2y_2b \equiv y_1y_2-x_1x_2\ (\bmod\ x_1y_1)
\end{align*}
By Lemma \ref{linsys} from Section \ref{sec:linmat} of the Appendix, the above system either has no solutions or it has a solution of the form $b=b_1t+b_2$, where $t\in \Z$, and hence all coefficients of the matrix $M$ are linear functions of $t$. In Proposition~\ref{regmat} we will show that such matrices can be written in the form $M=B\begin{bmatrix} 1 & k\\ 0 & 1 \end{bmatrix}^{t}C$, where $B$, $C$ are some matrices from $\SL$, $k$ is a fixed integer number and $t\in\Z$ is a free parameter. After that it is not hard to see that such solution translates into a regular language.

We will use a similar approach to prove that the equation $f_M(x)=y$ also defines a regular language. In fact, we will do it by showing that the solution set of  $f_M(x)=y$ is equal to the union of the solution sets of the equations $M\mathbf{x}=\mathbf{y}$ and $M\mathbf{x}= -\mathbf{y}$ for suitable vectors $\mathbf{x}$ and $\mathbf{y}$.

The final step is to show that there is an algorithm that decides whether the intersection of two regular subsets of $\SL$ is empty or not. Our idea relies on the fact that the intersection of two regular languages is regular, and that the emptiness problem for regular languages is decidable. The problem here is that we cannot apply these facts directly because for each matrix $M\in \SL$ there are infinitely many words ${w\in\{S,R\}}^*$ that correspond to $M$, and only some of them may appear in the given language. However there is only one \emph{reduced} word that corresponds to $M$, that is, the word that does not have a substring of the form $SS$ or $RRR$.
So, our solution is to take any automaton $\A$ and turn it into a new automaton $\wt{\A}$ that accepts the same language as $\A$ plus all reduced words $w$ that correspond to non-reduced words $w'$ accepted by $\A$.

Note that in $\SL$ we have an equality $S^2=R^3=-I$. Thus to construct $\wt{\A}$ we add to $\A$ a new $\varepsilon$-transition from a state $q_1$ to a state $q_2$ if there is a run of $\A$ from $q_1$ to $q_2$ labelled by $SS$ or $RRR$. We will apply this procedure iteratively until no new $\varepsilon$-transitions can be added. However we need to keep track of sign changes when we add new $\varepsilon$-transitions. To achieve this we will use \emph{signed automata}, which are slight modifications of the usual finite automata but they take into account such sign changes.

Now to solve the emptiness problem for the intersection of two regular languages $L_1$ and $L_2$, we take the signed automata $\A_1$ and $\A_2$ that accept $L_1$ and $L_2$, respectively, and construct new automata $\wt{\A}_1$ and $\wt{\A}_2$ as described above. After that we can check whether $L(\wt{\A}_1)\cap L(\wt{\A}_2)\neq\emptyset$.

In the Section \ref{sec:geom} we will show how to extend these
decidability results to arbitrary regular subsets of $\SL$, i.e., subsets that are defined by finite automata. Using this technique we will show how to algorithmically solve the equation
$M_1^{x_1}\cdots M_k^{x_k}\mathbf{x}=N_1^{y_1}\cdots N_l^{y_l}\mathbf{y}$, where $\mathbf{x},\mathbf{y}$ are given vectors from $\Z\times \Z$, the matrices $M_1,\dots, M_k$ and $N_1,\dots, N_l$ are from $\SL$, and $x_1,\dots,x_k$ and $y_1,\dots,y_l$ are unknown non-negative integers. Furthermore, we will show how to apply this method to prove that a special case of the scalar reachability problem is decidable.

\section{Main results}\label{main}

The characterization of the solution set of the equation $M\mathbf{x}=\mathbf{y}$ given in Theorem \ref{regmatthm} will follow from Propositions \ref{linmat} and \ref{regmat}. First, we prove one simple lemma which we will use several times in our arguments.
\begin{lemma} \label{lem:gcd}
Let $\mathbf{x}=\begin{bmatrix}x_1\\ x_2\end{bmatrix}$ and $\mathbf{y}=\begin{bmatrix}y_1\\ y_2\end{bmatrix}$ be vectors from $\Z\times\Z$ and $M$ be a matrix from $\SL$ such that $M\mathbf{x}=\mathbf{y}$. Then $\gcd(x_1,x_2)=\gcd(y_1,y_2)$.
\end{lemma}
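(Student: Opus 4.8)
The plan is to prove the two divisibilities $\gcd(x_1,x_2)\mid\gcd(y_1,y_2)$ and $\gcd(y_1,y_2)\mid\gcd(x_1,x_2)$ separately, and then conclude equality. The key structural fact I would exploit is that a matrix in $\SL$ is invertible \emph{over the integers}, with its inverse again lying in $\SL$, which makes the whole situation symmetric in $\mathbf{x}$ and $\mathbf{y}$.

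First I would write $M=\begin{bmatrix} a & b\\ c & d\end{bmatrix}$ and expand the equation $M\mathbf{x}=\mathbf{y}$ coordinatewise as $y_1=ax_1+bx_2$ and $y_2=cx_1+dx_2$. Since $a,b,c,d\in\Z$, each of $y_1,y_2$ is an integer linear combination of $x_1$ and $x_2$. Setting $g=\gcd(x_1,x_2)$, the divisibility $g\mid x_1$ and $g\mid x_2$ immediately gives $g\mid y_1$ and $g\mid y_2$, hence $g\mid\gcd(y_1,y_2)$. This handles the first direction.

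For the reverse direction I would use that $\det M=1$ forces $M^{-1}=\begin{bmatrix} d & -b\\ -c & a\end{bmatrix}$ to have integer entries and determinant $1$, so $M^{-1}\in\SL$ and $\mathbf{x}=M^{-1}\mathbf{y}$. Applying the identical linear-combination argument to $M^{-1}\mathbf{y}=\mathbf{x}$ yields $\gcd(y_1,y_2)\mid\gcd(x_1,x_2)$. Combining the two divisibilities gives $\gcd(x_1,x_2)=\gcd(y_1,y_2)$.

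I do not expect a genuine obstacle here, as the statement is elementary; the only substantive point is recognizing that the determinant-$1$ hypothesis is exactly what guarantees integrality of the inverse, and hence the symmetry that upgrades a one-sided divisibility into an equality. Without that hypothesis (e.g.\ for a general integer matrix) the reverse divisibility can fail, so I would make sure to invoke $M^{-1}\in\SL$ explicitly rather than merely asserting invertibility over $\Q$.
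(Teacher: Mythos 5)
Your proof is correct and follows essentially the same route as the paper: expand $M\mathbf{x}=\mathbf{y}$ to see that $y_1,y_2$ are integer combinations of $x_1,x_2$, then use $M^{-1}\in\SL$ to run the same argument in reverse. The only cosmetic difference is that the paper quantifies over an arbitrary common divisor $k$ of $x_1,x_2$ rather than working with the gcd directly, which is an equivalent formulation.
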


\begin{proof}
Take any $k\in\Z$ such that ${k\mid x_1,x_2}$ and let $M=\begin{bmatrix} a & b\\ c & d \end{bmatrix}$. Then from $M\mathbf{x}=\mathbf{y}$ we have $y_1=ax_1+bx_2$ and $y_2=cx_1+dx_2$. Thus $k\mid y_1,y_2$. Now since $M\in\SL$, $M^{-1}$ is also in $\SL$, and $M\mathbf{x}=\mathbf{y}$ is equivalent to $M^{-1}\mathbf{y}=\mathbf{x}$. So, if $k\in \Z$ is any number such that $k\mid y_1,y_2$, then $k\mid x_1,x_2$. Therefore, $\gcd(x_1,x_2)=\gcd(y_1,y_2)$.
\end{proof}

\setcounter{thm1}{\value{theorem}}
\begin{proposition} \label{linmat}
  Let $\mathbf{x}=\begin{bmatrix}x_1\\ x_2\end{bmatrix}$ and $\mathbf{y}=\begin{bmatrix}y_1\\ y_2\end{bmatrix}$ be two vectors from $\Z\times\Z$, such that $\mathbf{x}$ is not equal to the zero vector\/ $\mathbf{0}$, and consider the matrix equation $M\mathbf{x}=\mathbf{y}$, where $M$ is an unknown matrix from $\SL$. Then either this equation does not have a solution or all its solutions are given by $M=tA_1+A_2$, where $t$ is any integer number, $A_1,A_2$ are some matrices from $\Z^{2\times 2}$ such that $A_1$ is a nonzero matrix. Moreover, there is a polynomial time algorithm that determines whether such equation has a solution and if so, finds it.
\end{proposition}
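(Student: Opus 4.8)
The plan is to reduce the matrix equation to three scalar equations in the entries of $M=\begin{bmatrix} a & b\\ c & d\end{bmatrix}$: the two linear equations $ax_1+bx_2=y_1$ and $cx_1+dx_2=y_2$ coming from $M\mathbf{x}=\mathbf{y}$, together with the determinant condition $ad-bc=1$. First I would record the necessary condition for solvability: by Lemma \ref{lem:gcd}, if any solution exists then $\gcd(x_1,x_2)=\gcd(y_1,y_2)$, so the algorithm can reject immediately (returning ``no solution'') whenever this fails. Writing $g=\gcd(x_1,x_2)$, $u=x_1/g$, $v=x_2/g$ (so $\gcd(u,v)=1$), and $Y_1=y_1/g$, $Y_2=y_2/g$, the task becomes to parametrise all integer tuples $(a,b,c,d)$ satisfying the three equations.

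The two linear Diophantine equations involve disjoint pairs of unknowns and, since $g\mid y_1$ and $g\mid y_2$, each is solvable; the extended Euclidean algorithm produces particular solutions $(a_0,b_0)$ and $(c_0,d_0)$ in polynomial time. Their complete solution sets are the one-parameter families $(a,b)=(a_0,b_0)+s(-v,u)$ and $(c,d)=(c_0,d_0)+r(-v,u)$ with $r,s\in\Z$. The decisive step is to substitute these into $ad-bc$: the quadratic cross terms cancel, and using $a_0u+b_0v=Y_1$ and $c_0u+d_0v=Y_2$ the determinant collapses to the single \emph{linear} expression $D_0+rY_1-sY_2$, where $D_0=a_0d_0-b_0c_0$. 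Thus the determinant condition is equivalent to the linear Diophantine equation $rY_1-sY_2=1-D_0$.

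At this point the gcd condition pays off: when $\gcd(x_1,x_2)=\gcd(y_1,y_2)$ we have $\gcd(Y_1,Y_2)=1$, so $rY_1-sY_2=1-D_0$ is always solvable, and its solutions form the one-parameter family $(r,s)=(r_0,s_0)+t(Y_2,Y_1)$, $t\in\Z$. Back-substituting makes every entry of $M$ an affine function of $t$, which I would package as $M=tA_1+A_2$ with
$
A_1=\begin{bmatrix} Y_1\\ Y_2\end{bmatrix}\begin{bmatrix} -v & u\end{bmatrix}.
$
Since $(Y_1,Y_2)\neq(0,0)$ and $(u,v)\neq(0,0)$, this rank-one matrix $A_1$ is nonzero, as required; and because every step above is an equivalence, this family is exactly the set of all solutions. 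The algorithm merely runs the extended Euclidean algorithm a constant number of times and assembles $A_1,A_2$, so it is polynomial in the bit-length of the input.

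The main obstacle to anticipate is the determinant constraint, which is quadratic a priori and could in principle be inconsistent or force a degenerate solution set. The resolution is precisely the cancellation of the cross terms, which linearises it, combined with the gcd condition that guarantees solvability; I would present this cancellation as the technical heart of the argument. A secondary point is the edge cases — when $\mathbf{y}=\mathbf{0}$ (no solution, correctly caught since $\gcd(0,0)=0\neq g$) and when $x_1$ or $x_2$ vanishes — but these are handled uniformly by phrasing everything through $g$, $u$, $v$ and the extended Euclidean algorithm rather than by division, so no separate case analysis is needed.
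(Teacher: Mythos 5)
Your proof is correct, but it takes a genuinely different route from the paper's. The paper fixes $b$ as the free parameter, solves for $a$, $c$, $d$ as rational expressions in $b$ after reducing without loss of generality to $x_1\neq 0$ and $y_1\neq 0$, and then enforces integrality of $a$, $c$, $d$ through a system of three linear congruences in $b$, solved by the dedicated Lemma \ref{linsys}; the parameter $t$ arises there as the residue class $b=b_1t+b_2$, and $A_1\neq 0$ is read off from $b_1\neq 0$. You instead parametrise the complete integer solution sets of the two decoupled linear Diophantine equations by independent parameters $s$ and $r$, and your central observation --- that the quadratic cross terms in $ad-bc$ cancel, so the determinant constraint collapses to the single linear equation $rY_1-sY_2=1-D_0$ with $\gcd(Y_1,Y_2)=1$ --- replaces the congruence machinery by one further run of the extended Euclidean algorithm. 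What your version buys: no case analysis on vanishing coordinates, no auxiliary congruence lemmas, and as a by-product an exact solvability criterion (a solution exists if and only if $\gcd(x_1,x_2)=\gcd(y_1,y_2)$, sharpening the necessary condition of Lemma \ref{lem:gcd}), which the paper's proof leaves implicit in whether the congruence system happens to be consistent. What the paper's version buys: it works directly in the original coordinates without gcd normalisation and reuses Lemmas \ref{lineq} and \ref{linsys}, which the appendix needs anyway. Both arguments yield a rank-one $A_1$ and a polynomial-time algorithm, so either is a valid proof of the proposition.
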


\begin{proof}
See Section \ref{sec:linmat} of the Appendix.
\end{proof}

For the next proposition we will need the following theorem about the Smith normal form of a matrix.
\begin{theorem}[Smith normal form \cite{KB79}] \label{SNF}
For any nonzero matrix $A\in \Z^{2\times 2}$, there are matrices $B,C$ from $\SL$ such that
$
  A=B\begin{bmatrix} t_1 & 0\\ 0 & t_2 \end{bmatrix}C
$
for some $t_1,t_2\in \Z$ such that $t_1\neq 0$ and $t_1 \mid t_2$. Moreover, $B$, $C$, $t_1$, $t_2$ can be computed in polynomial time.
\end{theorem}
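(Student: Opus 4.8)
The plan is to prove this by the integer analogue of Gaussian elimination --- reducing $A$ to a diagonal matrix by a sequence of row and column operations --- taking care to use only operations of determinant $1$ so that the accumulated transformations land in $\SL$ rather than merely in $\mathrm{GL}(2,\Z)$. The elementary operations I would allow are left and right multiplication by the transvections $\begin{bmatrix} 1 & k\\ 0 & 1 \end{bmatrix}$ and $\begin{bmatrix} 1 & 0\\ k & 1 \end{bmatrix}$, which realise ``add $k$ times one row (column) to another'', together with the matrix $S=\begin{bmatrix} 0 & -1\\ 1 & 0 \end{bmatrix}$, which performs a \emph{signed} swap of two rows or columns. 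Each of these lies in $\SL$, so every product of them does too; hence the working matrix stays in $\SL\cdot A\cdot\SL$ and its determinant is preserved at every step.

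Since $A\neq 0$, I would first use $S$ to move a nonzero entry of least absolute value into the top-left position, and then run the Euclidean algorithm to clear the remainder of the first column and first row: subtracting integer multiples of the pivot row (column) from the others replaces an off-pivot entry by its remainder modulo the pivot, and whenever such a remainder is nonzero it is strictly smaller than the current pivot, so I move it into the $(1,1)$ position and repeat. As the pivot is then a strictly decreasing sequence of positive integers, this terminates with the first row and column cleared, leaving $\begin{bmatrix} t_1 & 0\\ 0 & t_2 \end{bmatrix}$ with $t_1\neq 0$. Because every operation has determinant $1$, we have $t_1t_2=\det A$ throughout.

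To force the divisibility $t_1\mid t_2$, I would add a corrective round: if $t_1\nmid t_2$, add row $2$ to row $1$ so the top row becomes $(t_1,\,t_2)$, then rerun the reduction above. The new pivot is $\gcd(t_1,t_2)$, strictly smaller in absolute value than $t_1$; iterating, the pivot strictly decreases and after finitely many rounds divides the other diagonal entry, while staying nonzero. Collecting the left factors as $B$ and the right factors as $C$ and inverting then yields $A=BDC$ with $B,C\in\SL$. (If one also wanted $t_1>0$, note that left and right multiplication by $-I\in\SL$ flips the signs of both diagonal entries at once and preserves divisibility, while conjugation by $S$ exchanges $t_1$ and $t_2$; neither flips a single sign, consistent with $t_1t_2=\det A$ being fixed.)

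For the complexity claim, each Euclidean reduction uses $O(\log\max_{i,j}|a_{ij}|)$ steps with arithmetic on integers whose sizes stay polynomial in the input, and only a constant number of such reductions are needed (one per row/column clearing and per divisibility round); the factors $B$ and $C$ are accumulated as the corresponding products of elementary matrices. The one point that genuinely needs care --- and the only place this differs from the textbook Smith form over $\mathrm{GL}(2,\Z)$ --- is the determinant bookkeeping: ordinary Smith reduction freely uses unsigned swaps and single-entry sign flips, each of determinant $-1$, whereas here I must route every swap through $S$ and absorb any residual sign into $t_2$. This is harmless because the statement imposes no sign constraint on $t_1,t_2$ beyond $t_1\neq 0$, but it is exactly the step one must watch to guarantee $B,C\in\SL$ rather than only $\det B,\det C\in\{\pm1\}$.
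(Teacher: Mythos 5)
Your proof is correct, but note that the paper offers no proof of this statement at all: it is imported as a known result with a citation to Kannan and Bachem \cite{KB79}, and the only thing the paper implicitly relies on beyond that citation is the easy passage from the usual $\mathrm{GL}(2,\Z)$ formulation of Smith normal form to the $\SL$ one (absorb a possible determinant $-1$ of $B$ or $C$ into the sign of $t_2$, which is legal because the statement places no positivity constraint on $t_1,t_2$). What you supply instead is a self-contained constructive argument by signed Euclidean elimination, and you correctly isolate the one genuine issue --- routing every swap through $S$ so that the accumulated left and right factors have determinant $+1$ --- which is exactly the point glossed over by the bare citation. Two small remarks. First, your complexity claim that ``only a constant number of such reductions are needed'' is slightly off: clearing the first row by column operations can re-pollute the first column, so you may have to alternate between row-clearing and column-clearing; but each non-terminating alternation replaces the pivot by a proper divisor of itself, so there are $O(\log)$ alternations of $O(\log)$ Euclidean steps each, with all intermediate entries polynomially bounded, and the polynomial-time conclusion survives. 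Second, your divisibility round in fact terminates after a single iteration: after it the pivot equals the gcd of the entries of $\mathrm{diag}(t_1,t_2)$, namely $\gcd(t_1,t_2)$, and the other diagonal entry becomes $t_1t_2/\gcd(t_1,t_2)=\pm\lcm(t_1,t_2)$, which the pivot divides; your weaker ``strictly decreasing pivot'' termination argument is of course also valid.
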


\begin{proposition} \label{regmat}
Let $A_1$ and $A_2$ be matrices from $\Z^{2\times 2}$ such that $A_1$ is a nonzero matrix and, for every $t\in \Z$, we have $tA_1+A_2\in \SL$. Then there are matrices $B$ and $C$ from $\SL$ and $k\in \Z$ such that\\
\null\qquad$
  tA_1+A_2 = BT^{kt}C\ \text{ for every } t\in \Z,
$\\
where $T = \begin{bmatrix} 1 & 1\\ 0 & 1 \end{bmatrix}\in \SL$. Moreover, $B$, $C$, and $k$ can be computed in polynomial time.
\end{proposition}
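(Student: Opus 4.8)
The plan is to use the hypothesis that $tA_1 + A_2 \in \SL$ for \emph{every} $t$ to first pin down the algebraic structure of $A_1$ and $A_2$, and then reduce to a normal form via the Smith normal form of $A_1$. The key observation is that $\det(tA_1 + A_2)$ is a polynomial in $t$ of degree at most two which, by hypothesis, takes the value $1$ for all integers $t$; hence it is identically $1$. Reading off the leading and constant coefficients gives $\det A_1 = 0$ and $\det A_2 = 1$. In particular $A_1$ is a nonzero singular integer matrix and $A_2 \in \SL$.

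Next I would normalize $A_1$. By Theorem \ref{SNF} there are $B_0, C_0 \in \SL$ with $A_1 = B_0 \begin{bmatrix} t_1 & 0\\ 0 & t_2 \end{bmatrix} C_0$, where $t_1 \neq 0$ and $t_1 \mid t_2$. Since $\det B_0 = \det C_0 = 1$ and $\det A_1 = 0$, we get $t_1 t_2 = 0$, so $t_2 = 0$ and $A_1 = B_0 \begin{bmatrix} t_1 & 0\\ 0 & 0 \end{bmatrix} C_0$. Conjugating the whole family, set $A_2' = B_0^{-1} A_2 C_0^{-1} \in \SL$; then $B_0^{-1}(tA_1+A_2)C_0^{-1} = t\begin{bmatrix} t_1 & 0\\ 0 & 0 \end{bmatrix} + A_2'$. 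Writing $A_2' = \begin{bmatrix} a' & b'\\ c' & d' \end{bmatrix}$, the determinant of this matrix is $t_1 d'\, t + (a'd' - b'c')$, which must again be identically $1$. Since $t_1 \neq 0$, this forces $d' = 0$, and then $b'c' = -1$; as the entries are integers we conclude $(b',c') \in \{(1,-1),(-1,1)\}$.

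The final step is an explicit recognition of the reduced form. With $d' = 0$ the conjugated family equals $\begin{bmatrix} t_1 t + a' & b'\\ c' & 0 \end{bmatrix}$, and using the elementary identity $T^{s}S = \begin{bmatrix} s & -1\\ 1 & 0 \end{bmatrix}$, where $S = \begin{bmatrix} 0 & -1\\ 1 & 0 \end{bmatrix}$, together with its analogue $T^{-s}S^{-1} = \begin{bmatrix} s & 1\\ -1 & 0 \end{bmatrix}$, this matrix is exactly $T^{a'}\,T^{t_1 t}\,S$ in the case $(b',c')=(-1,1)$, and $T^{-a'}\,T^{-t_1 t}\,S^{-1}$ in the case $(b',c')=(1,-1)$. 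Pulling $B_0$ and $C_0$ back through, we obtain $tA_1 + A_2 = (B_0 T^{\pm a'})\,T^{(\pm t_1)\, t}\,(S^{\pm 1} C_0)$, which is precisely the claimed form with $B = B_0 T^{\pm a'}\in\SL$, $C = S^{\pm 1}C_0\in\SL$ and $k = \pm t_1 \in \Z$. Every operation here — the Smith normal form, the inversions $B_0^{-1}, C_0^{-1}$, and the matrix products — runs in polynomial time, giving the claimed complexity.

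I expect the only delicate point to be bookkeeping rather than depth: making sure the sign and case analysis for $(b',c')$ is handled cleanly so that $k$ emerges as an integer with the correct sign, and being careful that the reduction to $d'=0$ genuinely follows from the ``for all $t$'' hypothesis, which is what converts a determinant condition holding at each individual $t$ into the polynomial identity driving the whole argument. There is no serious geometric or number-theoretic obstacle; the real content is that the Smith normal form of the singular matrix $A_1$ simultaneously puts the entire one-parameter family into a single shear sandwiched between two fixed elements of $\SL$.
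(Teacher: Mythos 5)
Your proposal is correct and follows essentially the same route as the paper's own proof: expand the determinant as a polynomial in $t$ to get $\det A_1=0$ and $\det A_2=1$, use the Smith normal form to reduce $A_1$ to $\mathrm{diag}(t_1,0)$, force $d'=0$ and $b'c'=-1$ from the residual determinant identity, and peel off a power of $T$. The only (immaterial) difference is bookkeeping: the paper absorbs the constant shear $T^{ca}$ into $C$ on the right, while you absorb $T^{\pm a'}$ into $B$ on the left.
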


\begin{proof}
Let $A_1=\begin{bmatrix} a_1 & b_1\\ c_1 & d_1\end{bmatrix}$ and $A_2=\begin{bmatrix} a_2 & b_2\\ c_2 & d_2\end{bmatrix}$. By the assumption, for every $t\in \Z$,
$
  \begin{vmatrix}
   a_1t+a_2 & b_1t+b_2\\ 
   c_1t+c_2 & d_1t+d_2\\ 
  \end{vmatrix} = 1.
$
That is
$(a_1t+a_2)(d_1t+d_2) - (b_1t+b_2)(c_1t+c_2) =1$ or\\
\null\qquad$(a_1d_1-b_1c_1)t^2 + (a_1d_2+a_2d_1-b_1c_2-b_2c_1)t + a_2d_2-b_2c_2 =1$ for all $t\in \Z$.\\
Therefore, $a_1d_1-b_1c_1=0$, $a_1d_2+a_2d_1-b_1c_2-b_2c_1=0$, and $a_2d_2-b_2c_2=1$. In particular, $\det(A_1)=0$ and $\det(A_2)=1$.

By Theorem \ref{SNF}, there are matrices $F,G\in \SL$ such that $A_1=F\begin{bmatrix} k & 0\\ 0 & l \end{bmatrix}G$ for some $k,l\in \Z$ such that $k\mid l$. Since $\det(A_1)=0$ we have that $kl=0$. However if $k=0$ and $l=0$, then $A_1$ is equal to the zero matrix, contrary to the assumption. Hence we must have that $k\neq 0$ and $l=0$.

Now $F^{-1}(tA_1+A_2)G^{-1} = \begin{bmatrix} kt+a & b\\ c & d \end{bmatrix}$, for some $a,b,c,d \in \Z$. Note that since $\det(F)=\det(G)=\det(tA_1+A_2)=1$, we have
$\begin{vmatrix}
 kt+a & b\\ 
 c & d\\ 
\end{vmatrix} = dkt+ad-bc = 1$ for every $t\in \Z$.
Hence $dk=0$ and so $d=0$. Substituting $d=0$ in the above equation, we obtain $bc=-1$. Since $b$ and $c$ are integers, there are only two possibilities: $b=1$, $c=-1$, or $b=-1$, $c=1$. So the above matrix actually looks like
$
F^{-1}(tA_1+A_2)G^{-1} = \begin{bmatrix} kt+a & \mp1\\ \pm1 & 0 \end{bmatrix}.
$
Therefore, $T^{-c(kt+a)}F^{-1}(tA_1+A_2)G^{-1} = D$, where $c=\pm1$ and $D= \begin{bmatrix} 0 & \mp1\\ \pm1 & 0 \end{bmatrix}\in \SL$. Hence
$
  tA_1+A_2 = FT^{(ck)t}T^{ca}DG.
$
Note that $F$ and $T^{ca}DG$ are in $\SL$. This completes the proof. The bound on complexity follows from the fact that $F$ and $G$ can be computed in PTIME by Theorem \ref{SNF}.
\end{proof}

As a corollary of Propositions \ref{linmat} and \ref{regmat} we obtain the following theorem.

\begin{theorem} \label{regmatthm}
Let $\mathbf{x}=\begin{bmatrix}x_1\\ x_2\end{bmatrix}$ and $\mathbf{y}=\begin{bmatrix}y_1\\ y_2\end{bmatrix}$ be vectors from $\Z\times\Z$ such that $\mathbf{x}\neq \mathbf{0}$, and consider the matrix equation $M\mathbf{x}=\mathbf{y}$, where $M$ is an unknown matrix from $\SL$. Then either this equation does not have a solution or all its solutions are given by the following formula
$M = B{\begin{bmatrix} 1 & k\\ 0 & 1 \end{bmatrix}}^{t}C$, where $t\in \Z$.

In the above expression $B$ and $C$ are some matrices from $\SL$, and $k$ is an integer number. Moreover, there is a polynomial time algorithm that determines whether such equation has a solution and if so, finds the suitable matrices $B$, $C$ and the integer $k$.
\end{theorem}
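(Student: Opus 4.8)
The plan is to derive the theorem directly as a corollary of Propositions \ref{linmat} and \ref{regmat}, since those two statements already do all the substantive work. First I would apply Proposition \ref{linmat} to the equation $M\mathbf{x}=\mathbf{y}$. Since $\mathbf{x}\neq\mathbf{0}$, the hypothesis of that proposition is met, so in polynomial time we can decide whether a solution exists. If no solution exists we are in the first alternative of the theorem and there is nothing more to prove. Otherwise, Proposition \ref{linmat} hands us matrices $A_1,A_2\in\Z^{2\times 2}$ with $A_1$ nonzero such that the full solution set is exactly $\{\,tA_1+A_2 : t\in\Z\,\}$.

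The next step is to feed this output into Proposition \ref{regmat}. The one thing I must check is that its hypothesis holds, namely that $tA_1+A_2\in\SL$ for every $t\in\Z$. This is immediate: every matrix in the solution set produced by Proposition \ref{linmat} is by definition a matrix $M\in\SL$ satisfying $M\mathbf{x}=\mathbf{y}$, and the solution set is parametrized by all integers $t$. Hence each $tA_1+A_2$ lies in $\SL$, and together with $A_1\neq 0$ this is precisely what Proposition \ref{regmat} requires. Applying it yields matrices $B,C\in\SL$ and an integer $k$ with $tA_1+A_2 = B\,T^{kt}\,C$ for all $t\in\Z$, where $T=\begin{bmatrix}1 & 1\\ 0 & 1\end{bmatrix}$.

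Finally I would rewrite this in the form demanded by the theorem using the elementary identity $T^{kt}=\begin{bmatrix}1 & kt\\ 0 & 1\end{bmatrix}=\begin{bmatrix}1 & k\\ 0 & 1\end{bmatrix}^{t}$, so that the solution set becomes $\{\,B\begin{bmatrix}1 & k\\ 0 & 1\end{bmatrix}^{t}C : t\in\Z\,\}$, exactly as claimed. The polynomial-time bound follows by composition: Proposition \ref{linmat} produces $A_1,A_2$ (and decides solvability) in polynomial time, and Proposition \ref{regmat} produces $B,C,k$ from them in polynomial time.

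Because the heavy lifting is already packaged in the two propositions, there is no genuine obstacle here; the only point requiring care is the bookkeeping observation that the parametrization coming from Proposition \ref{linmat} ranges over all of $\Z$ and lands entirely inside $\SL$, which is exactly the input format that Proposition \ref{regmat} expects.
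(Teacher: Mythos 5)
Your proposal is correct and follows exactly the route the paper intends: the paper presents Theorem \ref{regmatthm} precisely ``as a corollary of Propositions \ref{linmat} and \ref{regmat}'' without further argument, and your write-up supplies the same chaining (including the needed check that the parametrized family from Proposition \ref{linmat} lies entirely in $\SL$ and the identity $T^{kt}=\begin{bmatrix}1 & k\\ 0 & 1\end{bmatrix}^{t}$).
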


In Section \ref{sec:geom} we will give a geometric interpretation  of reachability paths (Figure \ref{fig:lin} and Proposition \ref{prop:uv}), using which we can prove the following corollary.\footnote{Even though we use Corollary \ref{regmatcor} in the proofs of Theorem \ref{mobiusthm} and Proposition \ref{srp}, it is not essential there for proving decidability. Namely, all references to Corollary \ref{regmatcor} in these proofs can be replaced by references to Theorem~\ref{regmatthm}, at the same time replacing $T$ with $T^k$ where appropriate.} The proof itself can be found in Section \ref{sec:cor} of the Appendix.

\setcounter{thm4}{\value{theorem}}
\begin{corollary} \label{regmatcor}
The value of the parameter $k$ in Theorem~\ref{regmatthm} is equal to 1.
\end{corollary}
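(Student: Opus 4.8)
The plan is to read the solution set of $M\mathbf{x}=\mathbf{y}$ group-theoretically and to exploit the fact that point stabilizers in $\SL$ are generated by a parabolic of \emph{unit} step. Write $T=\begin{bmatrix}1&1\\0&1\end{bmatrix}$ and $N=\begin{bmatrix}0&1\\0&0\end{bmatrix}$, so that $T^{k}=I+kN$. I assume the equation is solvable (otherwise there is no $k$ to discuss), fix a particular solution, and note that since $\mathbf{x}\neq\mathbf{0}$ and $M$ is invertible we have $\mathbf{y}\neq\mathbf{0}$.

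First I would describe the stabilizer $\mathrm{Stab}(\mathbf{y})=\{V\in\SL:V\mathbf{y}=\mathbf{y}\}$. Writing $\mathbf{y}=g\mathbf{y}'$ with $g=\gcd(y_1,y_2)$ and $\mathbf{y}'$ primitive, one has $V\mathbf{y}=\mathbf{y}\iff V\mathbf{y}'=\mathbf{y}'$. Choosing $H\in\SL$ whose first column is $\mathbf{y}'$ (possible exactly because $\mathbf{y}'$ is primitive), the condition becomes $(H^{-1}VH)e_1=e_1$ where $e_1=\begin{bmatrix}1\\0\end{bmatrix}$, and a direct computation shows that the stabilizer of $e_1$ in $\SL$ is precisely $\{T^{m}:m\in\Z\}$. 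Hence $\mathrm{Stab}(\mathbf{y})=\{HT^{m}H^{-1}:m\in\Z\}$ is infinite cyclic, generated by the unit-step parabolic $HTH^{-1}$.

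Next I would combine this with Theorem~\ref{regmatthm}. Setting $M_0=BC$ (the solution at $t=0$) and $P=BT^{k}B^{-1}$, we have $BT^{kt}C=P^{t}M_0$, so Theorem~\ref{regmatthm} asserts that the full solution set equals $\{P^{t}M_0:t\in\Z\}$. On the other hand, for any two solutions $M,M_0$ the product $MM_0^{-1}$ fixes $\mathbf{y}$, so the solution set is the right coset $\mathrm{Stab}(\mathbf{y})\,M_0$. Cancelling $M_0$ on the right yields the exact equality $\{P^{t}:t\in\Z\}=\mathrm{Stab}(\mathbf{y})$. Thus $P$ is a \emph{generator} of the infinite cyclic group $\mathrm{Stab}(\mathbf{y})$, whence $P=(HTH^{-1})^{\pm1}=HT^{\pm1}H^{-1}$. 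Then, setting $W=H^{-1}B\in\SL$, I obtain $WT^{k}W^{-1}=T^{\pm1}$; using $T^{k}=I+kN$ this reads $kWNW^{-1}=\pm N$, so $WNW^{-1}=\pm k^{-1}N$. Since $W$ and $W^{-1}$ have integer entries, $WNW^{-1}$ is an integer matrix, forcing $k\mid 1$ and hence $|k|=1$. Because the family $\{BT^{kt}C:t\in\Z\}$ is unchanged under $k\mapsto-k$ (reparametrise $t\mapsto-t$), and $k$ arises as the (positive) first invariant factor of $A_1$ in Proposition~\ref{regmat}, we conclude $k=1$.

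I expect the main obstacle to lie in the two structural inputs rather than in the final computation: proving that $\mathrm{Stab}(\mathbf{y})$ is generated by a parabolic of \emph{step exactly one} (the primitivity and completion argument, including the reduction from non-primitive $\mathbf{y}$ to primitive $\mathbf{y}'$), and justifying that $P$ must be a \emph{true} generator of the stabilizer rather than a proper power — this is precisely where the full strength of Theorem~\ref{regmatthm}, namely that the listed matrices exhaust \emph{all} solutions, is essential. This algebraic route is the computational shadow of the geometric picture in Figure~\ref{fig:lin}; once the generator has been identified, the conjugation-integrality step that extracts $|k|=1$ is routine.
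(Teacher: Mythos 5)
Your proof is correct, and it takes a genuinely different (more structural) route than the paper. The paper's own argument is a short containment: by Proposition \ref{prop:uv} the conjugated vectors $C\mathbf{x}$ and $B^{-1}\mathbf{y}$ coincide and equal $\begin{bmatrix}d\\ 0\end{bmatrix}$, which $T$ fixes, so every unit-step matrix $BT^{t}C$ already maps $\mathbf{x}$ to $\mathbf{y}$; since Theorem \ref{regmatthm} lists \emph{all} solutions, this forces $\{BT^{t}C : t\in\Z\}\subseteq\{BT^{kt}C : t\in\Z\}$, and the reverse inclusion is trivial, so the two sets coincide --- the paper never pins down $k$ numerically. You instead compute the full stabilizer $\mathrm{Stab}(\mathbf{y})=H\langle T\rangle H^{-1}$ from scratch (reduction to a primitive vector, completion of that vector to a matrix $H\in\SL$, and the direct computation that the stabilizer of $e_1$ is $\{T^{m}\}$), identify the solution set as the right coset $\mathrm{Stab}(\mathbf{y})M_0$, and then use exhaustiveness of Theorem \ref{regmatthm} to force $P=BT^{k}B^{-1}$ to be a true generator of this infinite cyclic group, extracting $|k|=1$ from the integrality of $WNW^{-1}$. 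Both arguments pivot on the same leverage point --- that the characterization in Theorem \ref{regmatthm} is exhaustive --- but yours buys more: the coset description essentially subsumes Lemma \ref{lem:ind} (independence of the choice of $B$ and $C$), and you obtain the sharper numerical fact $|k|=1$ rather than only the set identity; the paper's version buys brevity by leaning on the already-established geometric Proposition \ref{prop:uv}. The one loose thread in your write-up is the closing sentence fixing the sign of $k$, but as you note the family is invariant under $t\mapsto -t$, so this is immaterial to the corollary's actual content, which is the set equality $\{BT^{kt}C : t\in\Z\}=\{BT^{t}C : t\in\Z\}$.
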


Theorem \ref{regmatthm} provides us with a characterization of the matrices $M\in \SL$ that map vector $\mathbf{x}$ to vector $\mathbf{y}$. This characterization will be used later to prove the decidability of the vector reachability problem. We now give a similar characterization of the matrices $M\in \SL$ for which the fractional linear transformation $f_M$ maps a number $x$ to number $y$. In fact, we will do this by reducing the problem to finding the solutions of the equation $M\mathbf{x}=\mathbf{y}$ which we discussed above.

\begin{theorem}\label{mobiusthm}
Let $x$ and $y$ be rational numbers and let $\F(x,y)$ be the following set of matrices from $\SL${\rm :}\\
\null\qquad
$
\F(x,y)=\{M\in \SL\ :\ f_M(x)=y\}.
$\\
Then $\F(x,y) = \F_1(x,y)\cup \F_2(x,y)$, where each $\F_i(x,y)$ is either empty or has the form\\
\null\qquad
$
\F_i(x,y) = \{B_iT^tC_i\ :\ t\in \Z\},
$\\
where $B_i$ and $C_i$ are some matrices from $\SL$. Moreover, there is a polynomial time algorithm that determines whether each $\F_i(x,y)$ is empty or not and in the latter case finds corresponding matrices $B_i$ and $C_i$. 
\end{theorem}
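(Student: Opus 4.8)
The plan is to reduce the equation $f_M(x)=y$ over $\SL$ to two instances of the vector equation $M\mathbf{x}=\mathbf{y}$ that was already solved in Theorem~\ref{regmatthm}, and then apply that result (together with Corollary~\ref{regmatcor}, which gives $k=1$ so that the relevant matrix is exactly $T=\begin{bmatrix}1&1\\0&1\end{bmatrix}$) to each instance. First I would write the given rationals in lowest terms as $x=x_1/x_2$ and $y=y_1/y_2$ with $\gcd(x_1,x_2)=\gcd(y_1,y_2)=1$ and, say, $x_2,y_2>0$ (the Euclidean algorithm does this in polynomial time), and set $\mathbf{x}=\begin{bmatrix}x_1\\x_2\end{bmatrix}$ and $\mathbf{y}=\begin{bmatrix}y_1\\y_2\end{bmatrix}$. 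For $M=\begin{bmatrix}a&b\\c&d\end{bmatrix}$, a direct computation gives $M\mathbf{x}=\begin{bmatrix}ax_1+bx_2\\ cx_1+dx_2\end{bmatrix}$ and $f_M(x)=\frac{ax_1+bx_2}{cx_1+dx_2}$, so $f_M(x)$ is precisely the ratio of the two entries of $M\mathbf{x}$.

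The heart of the argument is the equivalence
\[
f_M(x)=y \iff M\mathbf{x}=\mathbf{y}\ \text{ or }\ M\mathbf{x}=-\mathbf{y}.
\]
The backward direction is immediate, since then $f_M(x)=\frac{\pm y_1}{\pm y_2}=y$. For the forward direction the key tool is Lemma~\ref{lem:gcd}: because $\gcd(x_1,x_2)=1$, the entries of $M\mathbf{x}$ are again coprime, so the fraction $\frac{ax_1+bx_2}{cx_1+dx_2}$ is already in lowest terms. Equating it to $y=\frac{y_1}{y_2}$, also in lowest terms, forces $ax_1+bx_2=\varepsilon y_1$ and $cx_1+dx_2=\varepsilon y_2$ for a common sign $\varepsilon\in\{\pm1\}$, i.e.\ $M\mathbf{x}=\pm\mathbf{y}$. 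I would also note that the degenerate case $cx+d=0$ (where $f_M(x)=\infty$) causes no trouble: it corresponds to $cx_1+dx_2=0$, which cannot equal $\pm y_2\neq 0$, and is therefore automatically excluded by both vector equations.

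With this equivalence in hand, the theorem follows by letting $\F_1(x,y)$ and $\F_2(x,y)$ be the solution sets of $M\mathbf{x}=\mathbf{y}$ and $M\mathbf{x}=-\mathbf{y}$, respectively, so that $\F(x,y)=\F_1(x,y)\cup\F_2(x,y)$. Since $\mathbf{x}\neq\mathbf{0}$, Theorem~\ref{regmatthm} applies to each: each solution set is either empty or of the form $\{B_iT^{kt}C_i:t\in\Z\}$, and by Corollary~\ref{regmatcor} we have $k=1$, giving exactly the claimed shape $\{B_iT^{t}C_i:t\in\Z\}$. The polynomial-time claim is inherited directly: reducing $x,y$ to lowest terms is polynomial, and the algorithm of Theorem~\ref{regmatthm} is run twice to decide the emptiness of each $\F_i$ and, when nonempty, to produce the matrices $B_i,C_i$.

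I do not anticipate a serious obstacle; the content is the reduction itself. The one point demanding care is the coprimality step: without Lemma~\ref{lem:gcd} one could only conclude that $M\mathbf{x}$ is some rational multiple of $\mathbf{y}$, and it is precisely the fact that $M\mathbf{x}$ has coprime entries, together with the sign ambiguity inherent in choosing a coprime integer representative of a rational number, that collapses the possibilities to $\pm\mathbf{y}$. This is exactly why $\F(x,y)$ splits as a union of \emph{two} sets rather than one.
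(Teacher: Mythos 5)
Your proposal is correct and follows essentially the same route as the paper: write $x$ and $y$ in lowest terms, use Lemma~\ref{lem:gcd} to show $M\mathbf{x}$ has coprime entries and hence $f_M(x)=y$ is equivalent to $M\mathbf{x}=\pm\mathbf{y}$, then apply Theorem~\ref{regmatthm} and Corollary~\ref{regmatcor} to each of the two vector equations. Your explicit treatment of the degenerate case $cx+d=0$ is a small addition the paper leaves implicit, but the argument is otherwise the same.
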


\begin{proof}
Let us write the numbers $x$ and $y$ as $x=\frac{x_1}{x_2}$ and $y=\frac{y_1}{y_2}$, where we assume that $\gcd(x_1,x_2) = \gcd(y_1,y_2) = 1$. Consider the equation $f_M(x)=y$, where $M=\begin{bmatrix} a & b\\ c & d \end{bmatrix}$ is an unknown matrix from $\SL$. We can rewrite it as
\begin{equation} \label{eq:frac}
\frac{a\frac{x_1}{x_2}+b}{c\frac{x_1}{x_2}+d}=\frac{y_1}{y_2}\quad \text{or}\quad
\frac{ax_1+bx_2}{cx_1+dx_2}=\frac{y_1}{y_2}\,.
\end{equation}
Consider the vectors $\mathbf{x}=\begin{bmatrix}x_1\\ x_2\end{bmatrix}$, $\mathbf{y}=\begin{bmatrix}y_1\\ y_2\end{bmatrix}$, and $\mathbf{z}=\begin{bmatrix}z_1\\ z_2\end{bmatrix}$, where $\mathbf{z}$ is the vector with coordinates $z_1=ax_1+bx_2$ and $z_2=cx_1+dx_2$. So we have that $\mathbf{z}=M\mathbf{x}$. In this notation Equation (\ref{eq:frac}) is equivalent to the fact that vector $\mathbf{z}=M\mathbf{x}$ belongs to the set $\{k\mathbf{y}\ :\ k\in \Z\}$.

Recall that $\gcd(x_1,x_2)=1$ and hence, by Lemma \ref{lem:gcd}, we also have that $\gcd(z_1,z_2)=1$. Thus if $\mathbf{z}=k\mathbf{y}$ for some $k\in\Z$, then we must have that $k=\pm 1$. In other words, we showed that Equation~(\ref{eq:frac}) is equivalent to two matrix equations: $M\mathbf{x}=\mathbf{y}$ and $M\mathbf{x}=-\mathbf{y}$. So we have that
$\F(x,y) = \F_1(x,y)\cup \F_2(x,y)$, where\\
\null\quad
$
\F_1(x,y) = \{M\in \SL\ :\ M\mathbf{x}=\mathbf{y}\}\quad \text{and}\quad
\F_2(x,y) = \{M\in \SL\ :\ M\mathbf{x}=-\mathbf{y}\}.
$\\
Note that $\mathbf{x}\neq \mathbf{0}$ because $x_2\neq 0$. Hence by Theorem \ref{regmatthm} and Corollary \ref{regmatcor}, each $\F_i(x,y)$ is either empty or has the form $\F_i(x,y) = \{B_iT^tC_i\ :\ t\in \Z\}$ for some $B_i$ and $C_i$ from $\SL$ which can be computed in polynomial time. 
\end{proof}

Now we will use signed automata to prove that the emptiness problem for the intersection of two regular subsets of $\SL$ is decidable.

Consider an alphabet $\Sigma=\{S,R\}$ consisting of two symbols $S$ and $R$ and define the mapping $\phi:\Sigma\to \SL$ as follows: $\phi(S)=\begin{bmatrix} 0 & -1\\ 1 & 0 \end{bmatrix}$ and $\phi(R)=\begin{bmatrix} 0 & -1\\ 1 & 1 \end{bmatrix}$. We can extend this mapping to the morphism $\phi:\Sigma^*\to \SL$ in a natural way. The matrices $\phi(S)$ and $\phi(R)$ are in fact generators of $\SL$, so $\phi$ is surjective.
We call a word $w\in \Sigma^*$ \emph{reduced} if it does not have substrings of the form $SS$ or $RRR$. In our proof we will make use of the following well-known fact.

\begin{theorem}[\cite{LS,MKS,Ran}] \label{thm:uniq}
For every $M\in \SL$, there exists a unique reduced word $w\in \Sigma^*$ such that either $M=\phi(w)$ or $M=-\phi(w)$.
\end{theorem}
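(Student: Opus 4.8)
The plan is to pass to the quotient $\PSL = \SL/\{\pm I\}$ and exploit the classical fact that $\PSL$ is the free product of a cyclic group of order two and a cyclic group of order three. First I would record the two identities $\phi(S)^2 = \phi(R)^3 = -I$ by direct computation. These show that $-I$ lies in the image and that, modulo $\pm I$, the images $\bar S$ and $\bar R$ of $\phi(S)$ and $\phi(R)$ satisfy $\bar S^2 = \bar R^3 = 1$. Since $\phi(S)$ and $\phi(R)$ generate $\SL$, their images generate $\PSL$, and the cited results \cite{LS,MKS,Ran} give that $\PSL = \langle \bar S\rangle * \langle \bar R\rangle \cong \Z/2\Z * \Z/3\Z$; hence every element of $\PSL$ has a unique normal form as an alternating product of the nontrivial syllables $\bar S$ (from the order-two factor) and $\bar R,\bar R^2$ (from the order-three factor).

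The key observation is that reduced words over $\Sigma=\{S,R\}$ correspond exactly to these normal forms. A word avoiding $SS$ and $RRR$ decomposes uniquely into maximal blocks of equal letters: the $S$-blocks have length one and the $R$-blocks have length one or two, and consecutive blocks use different letters. Reading each $S$-block as the syllable $\bar S$ and each $R$-block of length one or two as $\bar R$ or $\bar R^2$, respectively, I obtain a bijection between reduced words in $\Sigma^*$ and free-product normal forms in $\PSL$, under which a reduced word $w$ is sent to the class of $\phi(w)$.

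Existence of a reduced word for a given $M\in\SL$ follows by a reduction procedure. Since $\phi$ is surjective, write $M=\phi(u)$ for some $u\in\Sigma^*$; whenever $u$ contains a factor $SS$ or $RRR$, delete it and multiply by $-1$, using $\phi(S)^2=\phi(R)^3=-I$. Each step strictly shortens the word, so after finitely many steps I reach a reduced word $w$ with $\phi(u)=\pm\phi(w)$, whence $M=\phi(w)$ or $M=-\phi(w)$. For uniqueness, suppose reduced words $w_1,w_2$ satisfy $\phi(w_1)=\pm\phi(w_2)$; then they determine the same element of $\PSL$, so they yield the same normal form, and by the bijection $w_1=w_2$. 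The $\pm$ lift is itself unambiguous: the matrix $M$ fixes which of $M=\phi(w)$ or $M=-\phi(w)$ holds, since $\phi(w)$ and $-\phi(w)$ are distinct elements of $\SL$.

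The main obstacle is the appeal to the free-product structure of $\PSL$: the genuinely nontrivial content is that there are no relations among $\bar S$ and $\bar R$ beyond $\bar S^2=\bar R^3=1$, which is precisely the normal form theorem for $\Z/2\Z * \Z/3\Z$. I would take this as the cited input \cite{LS,MKS,Ran}; a self-contained argument would instead require the standard presentation of $\PSL$ or a ping-pong argument for its action on the hyperbolic plane. Everything else in the proof is elementary bookkeeping, with the block-decomposition bijection being the only point where one must be careful about the precise shape of reduced words.
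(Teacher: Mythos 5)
Your argument is correct, and it coincides with what the paper does: the paper offers no proof of this theorem at all, citing \cite{LS,MKS,Ran} for precisely the fact you isolate as the essential input, namely that $\PSL\cong \Z/2\Z * \Z/3\Z$ with free factors generated by the images of $\phi(S)$ and $\phi(R)$. Your surrounding bookkeeping --- the sign-tracking reduction using $\phi(S)^2=\phi(R)^3=-I$ for existence, and the bijection between reduced words over $\Sigma$ and alternating syllable normal forms for uniqueness --- is the standard and correct way to derive the stated form of the theorem from that classical fact.
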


\begin{definition}
A \emph{signed automaton} $\A=(\Sigma,Q,I,\Delta,F^+,F^-)$ is a (non-deterministic) finite automaton  whose final states are divided into two (not necessarily  disjoint) subsets $F^+$ and $F^-$.

A \emph{signed language} accepted by a signed automaton $\A$ is a pair $L(\A)=(L(\A)^+,L(\A)^-)$, where $L(\A)^+$ and $L(\A)^-$ consists of the words $w\in \Sigma^*$ for which there is a run of $\A$ that ends in the set $F^+$ or $F^-$, respectively. Note that we do not assume that $L(\A)^+$ and $L(\A)^-$ are disjoint.
\end{definition}

Let $L=(L^+,L^-)$ be a signed language, then we define a regular subset of $\SL$ corresponding to this language as
$
\phi(L) = \{\phi(w)\,:\, w\in L^+\}\cup \{-\phi(w)\, :\, w\in L^-\}.
$

The following proposition is an important ingredient of our main results.

\setcounter{thm2}{\value{theorem}}
\begin{proposition}\label{autom}
There is an algorithm that for any given regular signed languages $L_1$ and~$L_2$ over the alphabet $\Sigma$, decides whether $\phi(L_1)\cap\phi(L_2)$ is empty or not. 
\end{proposition}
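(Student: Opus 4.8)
The plan is to reduce the matrix problem ``$\phi(L_1)\cap\phi(L_2)=\emptyset$?'' to an emptiness problem for an intersection of ordinary regular word languages, using the canonical reduced representative of a matrix provided by Theorem~\ref{thm:uniq}. The difficulty, as noted above, is that a single matrix $M\in\SL$ is the $\phi$-image of infinitely many words, so the automata for $L_1$ and $L_2$ may accept \emph{different} words that both map to $M$; in that case $L_1\cap L_2$ (as word languages) is empty while $\phi(L_1)\cap\phi(L_2)$ is not. By Theorem~\ref{thm:uniq}, however, every $M$ has a \emph{unique} reduced word $w_M$ together with a well-defined sign $\sigma_M\in\{+,-\}$ satisfying $M=\sigma_M\,\phi(w_M)$ (well-defined because $\phi(w_M)\neq -\phi(w_M)$ in $\SL$). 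So it suffices to arrange that, for every matrix $M$ represented by $L_i$, the canonical word $w_M$ is itself accepted by a saturated version $\wt{\A}_i$ of the automaton, with the correct sign. Then the two saturated automata will share the common word $w_M$ exactly when they share the matrix $M$.

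To build $\wt{\A}_i$ I would enrich signed automata with signed $\varepsilon$-transitions, each carrying a sign in $\{+1,-1\}$ to record the central element $-I$. A run over a word $w$ accumulates the product $s$ of the signs of the $\varepsilon$-transitions it traverses; if it ends in $F^+$ (resp.\ $F^-$) I declare it to \emph{represent} the matrix $s\,\phi(w)$ (resp.\ $-s\,\phi(w)$), and I set $\phi(L(\A))$ to be the set of matrices represented by accepting runs (for the $\varepsilon$-free automata of the statement this coincides with the given definition). Using the identities $\phi(SS)=\phi(RRR)=-I$, the saturation step is: whenever there is a run from $q_1$ to $q_2$ labelled $SS$ or $RRR$ and accumulating $\varepsilon$-sign $s_0$, add a new $\varepsilon$-transition $q_1\to q_2$ of sign $-s_0$. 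Since such a transition represents exactly the same central factor $-s_0\,I$ as the sub-run it short-cuts, adding it leaves the set of represented matrices unchanged, so the invariant $\phi(L(\wt{\A}))=\phi(L(\A))$ is preserved; and since there are only finitely many triples (source, target, sign), the saturation terminates, producing $\wt{\A}$.

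The technical heart -- and the step I expect to be the main obstacle -- is the following rerouting lemma: after saturation, for every accepting run representing a matrix $M$ there is an accepting run that reads the reduced word $w_M$ and still represents $M$; equivalently, $w_M\in L(\wt{\A})^{\sigma_M}$ whenever $M\in\phi(L(\A))$. I would prove this by induction on the length of the word $w$ read by the given run. If $w$ is reduced it must equal $w_M$ by the uniqueness in Theorem~\ref{thm:uniq}. Otherwise $w$ contains a factor $SS$ or $RRR$; I excise it by replacing the corresponding sub-run (from $q_1$ to $q_2$, of $\varepsilon$-sign $s_0$) with the saturated $\varepsilon$-transition $q_1\to q_2$ of sign $-s_0$. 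Because that transition and the excised sub-run represent the same matrix $-s_0\,I$, the new run ends in the same final state and represents the same $M$ while reading a strictly shorter word, so the induction applies. Care is needed because the sub-run realising a factor $SS$ or $RRR$ may itself contain $\varepsilon$-transitions; this is precisely why the saturation is iterated to a fixed point and why the sign of each added transition is computed as $-s_0$ rather than simply $-1$.

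Finally I would assemble the decision procedure. Let $L(\wt{\A}_i)^{+}$ and $L(\wt{\A}_i)^{-}$ be the (ordinary) regular languages of words read by runs of $\wt{\A}_i$ that represent $+\phi(w)$ and $-\phi(w)$, respectively; these are regular because $\wt{\A}_i$ is a finite $\varepsilon$-automaton whose sign bookkeeping can be absorbed into a doubled state space. I claim
\[\phi(L_1)\cap\phi(L_2)\neq\emptyset\ \iff\ L(\wt{\A}_1)^{+}\cap L(\wt{\A}_2)^{+}\neq\emptyset\ \text{ or }\ L(\wt{\A}_1)^{-}\cap L(\wt{\A}_2)^{-}\neq\emptyset.\]
For the backward direction, any word $v$ common to $L(\wt{\A}_1)^{\sigma}$ and $L(\wt{\A}_2)^{\sigma}$ yields the common matrix $\sigma\,\phi(v)\in\phi(L_1)\cap\phi(L_2)$, using the invariant $\phi(L(\wt{\A}_i))=\phi(L_i)$. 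For the forward direction, a common matrix $M$ has, by the rerouting lemma applied to each automaton, its reduced word $w_M$ in both $L(\wt{\A}_1)^{\sigma_M}$ and $L(\wt{\A}_2)^{\sigma_M}$ (same $w_M$, same $\sigma_M$, since both are intrinsic to $M$), so one of the two right-hand intersections is nonempty. Since intersection preserves regularity and emptiness of a regular language is decidable, this gives the required algorithm.
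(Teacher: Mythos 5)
Your proposal is correct and follows essentially the same route as the paper: saturate each signed automaton with $\varepsilon$-transitions recording the central factor $-I$ arising from $\phi(SS)=\phi(RRR)=-I$, argue that after saturation the canonical reduced word of each represented matrix is accepted with the correct sign, and then decide emptiness of the intersection of the resulting regular languages componentwise. The only cosmetic difference is that you place the sign on the $\varepsilon$-transitions while the paper doubles the state space $Q\times\{+,-\}$ (an equivalence you yourself note), and your explicit ``rerouting lemma'' with its induction on word length is a welcome elaboration of a step the paper dismisses with ``it is not hard to see.''
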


\begin{proof}
See Section \ref{sec:autom} of the Appendix.
\end{proof}

We are now ready to prove our main results.

\begin{theorem}\label{vrpthm}
The vector reachability problem and the reachability problem by fractional linear transformations in $\SL$ are decidable.
\end{theorem}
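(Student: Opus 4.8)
The plan is to reduce both problems to the emptiness test for the intersection of two regular subsets of $\SL$ furnished by Proposition~\ref{autom}. Throughout I would use that the morphism $\phi:\Sigma^*\to\SL$ is surjective, so every $N\in\SL$ can be written as $N=\phi(w)$ for some computable word $w\in\Sigma^*$. Consider first the \textbf{vector reachability problem} for vectors $\mathbf{x},\mathbf{y}$ and generators $M_1,\dots,M_n$. I would first dispose of the degenerate case $\mathbf{x}=\mathbf{0}$, where the answer is positive iff $\mathbf{y}=\mathbf{0}$. Assuming $\mathbf{x}\neq\mathbf{0}$, I would invoke Theorem~\ref{regmatthm}: either the equation $M\mathbf{x}=\mathbf{y}$ has no solution in $\SL$, in which case the answer is negative, or its full solution set is $\mathcal{S}=\{BT^tC:t\in\Z\}$ with $B,C\in\SL$ computable in polynomial time (taking $k=1$ by Corollary~\ref{regmatcor}). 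The reachability question is then exactly whether $\mathcal{S}\cap\langle M_1,\dots,M_n\rangle\neq\emptyset$.

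Next I would encode both of these sets as regular signed languages. Choosing words $\beta,\gamma,\tau,\tau'$ with $\phi(\beta)=B$, $\phi(\gamma)=C$, $\phi(\tau)=T$ and $\phi(\tau')=T^{-1}$, the set $\mathcal{S}$ equals $\phi(L_1)$ for the regular signed language $L_1$ whose positive part is $\beta\tau^*\gamma\cup\beta(\tau')^*\gamma$ and whose negative part is empty; here the two branches realise the nonnegative and the negative values of the parameter $t$. Likewise, if $\phi(w_i)=M_i$, then $\langle M_1,\dots,M_n\rangle=\phi(L_2)$ where $L_2$ has positive part $(w_1+\cdots+w_n)^*$ and empty negative part, the empty word accounting for the identity matrix. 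By construction $\phi(L_1)\cap\phi(L_2)\neq\emptyset$ iff the instance is positive, and this is decidable by Proposition~\ref{autom}.

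For the \textbf{reachability problem by fractional linear transformations} the argument is identical except that the role of the solution set is played by $\F(x,y)=\{M\in\SL:f_M(x)=y\}$. By Theorem~\ref{mobiusthm}, $\F(x,y)=\F_1(x,y)\cup\F_2(x,y)$, where each $\F_i(x,y)$ is either empty or of the form $\{B_iT^tC_i:t\in\Z\}$ with $B_i,C_i$ computable in polynomial time. I would encode each nonempty $\F_i(x,y)$ as a regular signed language exactly as $\mathcal{S}$ above; their union is again regular. Calling it $\phi(L_1)$ and keeping $L_2$ as before, the instance is positive iff $\phi(L_1)\cap\phi(L_2)\neq\emptyset$, which Proposition~\ref{autom} decides.

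The conceptually hard steps -- the characterisation of the solution sets (Theorems~\ref{regmatthm} and~\ref{mobiusthm}) and the decidability of emptiness of intersection of regular subsets of $\SL$ (Proposition~\ref{autom}) -- are already in place, so the main remaining obstacle is the faithful translation between matrices and words. The delicate point is that $\phi$ is not injective: distinct words may represent the same matrix, or may differ by the sign $-I$. This is precisely what the signed-language framework of Proposition~\ref{autom} is designed to absorb, so I must only ensure that the regular expressions above genuinely capture $\mathcal{S}$ (respectively $\F(x,y)$) as \emph{subsets of} $\SL$ -- in particular that both the nonnegative and the negative powers of $T$ are covered and that the identity matrix is included in the semigroup language.
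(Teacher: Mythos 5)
Your proposal is correct and follows essentially the same route as the paper's own proof: dispose of the case $\mathbf{x}=\mathbf{0}$, use Theorem~\ref{regmatthm} (with Corollary~\ref{regmatcor}) and Theorem~\ref{mobiusthm} to express the solution sets as $\{BT^tC : t\in\Z\}$, encode these and the semigroup $\langle M_1,\dots,M_n\rangle$ as regular signed languages (splitting $t\in\Z$ into nonnegative powers of $T$ and of $T^{-1}$, with the empty word giving the identity), and invoke Proposition~\ref{autom}. The only cosmetic difference is that the paper uses the explicit words $S^3R$ and $R^5S$ for $T$ and $T^{-1}$ where you use generic preimages under $\phi$.
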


\begin{proof}
Suppose $M_1,\ldots,M_n$ is a given finite collection of matrices from $\SL$. Let $w_1,\ldots,w_n\in\Sigma^*$ be some words, not necessarily reduced, such that $M_i=\phi(w_i)$, for $i=1,\ldots,n$. Define the language $\LL_{\mathit{semigr}}$ that corresponds to the semigroup $\langle M_1,\ldots,M_n\rangle$ as
$
\LL_{\mathit{semigr}} = {(w_1+w_2+\cdots+w_n)}^*.
$

Recall that in the vector reachability problem we are given two vectors $\mathbf{x}$ and $\mathbf{y}$ from $\Z\times\Z$, and we ask if there is a matrix $M\in \langle M_1,\ldots,M_n\rangle$ such that $M\mathbf{x}=\mathbf{y}$. We want to construct a regular language $\LL^{\mathrm{vrp}}_{\mathbf{x},\mathbf{y}}$ that corresponds to these matrices.

If $\mathbf{x}=\mathbf{0}$ and $\mathbf{y}\neq\mathbf{0}$, then we set $\LL^{\mathrm{vrp}}_{\mathbf{x},\mathbf{y}}=\emptyset$ because in this case the equation $M\mathbf{x}=\mathbf{y}$ does not have a solution. On the other hand, if $\mathbf{x}=\mathbf{0}$ and $\mathbf{y}=\mathbf{0}$, then we set $\LL^{\mathrm{vrp}}_{\mathbf{x},\mathbf{y}}={\{S,R\}}^*$ because any matrix $M\in \SL$ satisfies the equation $M\mathbf{0}=\mathbf{0}$.

Now assume that $\mathbf{x}\neq\mathbf{0}$. Then by Theorem \ref{regmatthm}, the matrix equation $M\mathbf{x}=\mathbf{y}$ either has no solution, or its solution has the form $\{BT^{t}C : t\in \Z\}$, where $T=\begin{bmatrix} 1 & 1\\ 0 & 1 \end{bmatrix}$, and $B$ and $C$ are some matrices from $\SL$. Moreover, $B$ and $C$ can be computed from $\mathbf{x}$ and $\mathbf{y}$ in PTIME.
In the case when $M\mathbf{x}=\mathbf{y}$ has no solution, we set $\LL^{\mathrm{vrp}}_{\mathbf{x},\mathbf{y}}=\emptyset$. If the solution set in non-empty, then we can rewrite it as\\
\null\qquad
$
\{BT^{t}C\, :\, t\in \Z\} = \{BT^tC\, :\, t\geq 0\} \cup \{BT^{-t}C\, :\, t\geq 0\}.
$\\
Let $u$ and $v$ be words from $\Sigma^*$ such that $B=\phi(u)$ and $C=\phi(v)$. It is easy to check that $T=\phi(S^3R)$ and $T^{-1}=\phi(R^5S)$. Hence
$
\LL^{\mathrm{vrp}}_{\mathbf{x},\mathbf{y}} = u{(S^3R)}^*v+u{(R^5S)}^*v
$
is a regular language that describes the solutions of the equation $M\mathbf{x}=\mathbf{y}$ in $\SL$.

In a similar way we can construct a regular language $\LL^{\mathrm{flt}}_{x,y}$ that corresponds to the reachability problem by fractional linear transformations from $x$ to $y$. By Theorem \ref{mobiusthm}, the set $\F(x,y)$ of matrices from $\SL$ that satisfy the equation $f_M(x)=y$ is equal to $\F(x,y) = \F_1(x,y)\cup \F_2(x,y)$, where each $\F_i(x,y)$ is either empty or has the form
$\F_i(x,y) = \{B_iT^tC_i : t\in \Z\}$, where $T$ is as above, and $B_i$ and $C_i$ are some matrices from $\SL$. All these matrices can be computed in PTIME from $x$ and $y$.

We define $\LL^{\mathrm{flt}}_{x,y}$ as the union $\LL^{\mathrm{flt}}_{x,y} = \LL_1\cup \LL_2$ of two regular languages $\LL_1$ and $\LL_2$.  If $\F_i(x,y)$ is empty, then we set $\LL_i=\emptyset$. Otherwise, let $u_i$ and $v_i$ be words from $\Sigma^*$ such that $B_i=\phi(u_i)$ and $C_i=\phi(v_i)$. Then we can define $\LL_i$ as
$
\LL_i = u_i{(S^3R)}^*v_i+u_i{(R^5S)}^*v_i.
$
Thus we defined a regular language $\LL^{\mathrm{flt}}_{x,y}$ that corresponds the solution set of the equation $f_M(x)=y$ in $\SL$.

We remind that in Proposition \ref{autom} we work with signed languages. Therefore, in what follows we convert every regular language $L$ that we have constructed so far into a corresponding signed language $(L,\emptyset)$.

Finally, the vector reachability problem for $\mathbf{x}$ and $\mathbf{y}$ has a solution if and only if\\
\null\qquad
$
\phi\big((\LL^{\mathrm{vrp}}_{\mathbf{x},\mathbf{y}},\emptyset)\big)\cap \phi\big((\LL_{\mathit{semigr}},\emptyset)\big)\neq \emptyset.
$\\
Similarly, the reachability problem by fractional linear transformations for $x$ and $y$ has a solution if and only if\\
\null\qquad
$
\phi\big((\LL^{\mathrm{flt}}_{x,y},\emptyset)\big)\cap \phi\big((\LL_{\mathit{semigr}},\emptyset)\big)\neq \emptyset.
$\\
By Proposition \ref{autom} these questions are algorithmically decidable.
\end{proof}

A characterization of the matrices $M$ from $\SL$ that satisfy the equation $M\mathbf{x}=\mathbf{y}$, which is given in Theorem \ref{regmatthm}, can be computed in polynomial time. However the overall complexity of the algorithm is EXPTIME due to the 
fact that a reduced word $w$ that corresponds to a given matrix $M$, i.e., such that $M=\pm\phi(w)$, has length exponential in the decimal presentation of $M$. So computing symbolic presentations of given matrices and constructing an automaton for the language $\LL_{\mathit{semigr}}$ takes exponential time. The next steps of the algorithm take only polynomial time in the size of this automaton.
However the PTIME algorithm for computing all mappings from $\mathbf{x}$ to $\mathbf{y}$
could be combined with the result of Gurevich and Schupp \cite{Gurevich2007}
to produce a polynomial time algorithm for the vector reachability problem over the modular group. 
Moreover, any improvement of EXPTIME solution proposed in \cite{CK2005} will
improve the complexity of the vector reachability problem.

\section{Geometric interpretation and extensions}\label{sec:geom}

Consider a semigroup generated by matrices $M_1,\ldots,M_n$ from $\SL$. As we showed above, this semigroup can be described by a regular language which we called $\LL_{\mathit{semigr}}$. It's not hard to see that the proof of Theorem \ref{vrpthm} remains valid if we replace $\LL_{\mathit{semigr}}$ by any other regular language, that is, a language defined by a finite automaton or a labelled transition system.

\begin{proposition}\label{genreg}
Suppose that we are given a finite collection of matrices $M_1,\ldots,M_n$ from $\SL$ and a regular language $L\subseteq {\{1,\ldots,n\}}^*$. Consider the following generalized reachability problems:
\begin{itemize}
\item {\bf Generalized vector reachability problem.} Given two vectors $\mathbf{x}$ and $\mathbf{y}$ with integer coefficients, decide whether there exists a word $i_1\ldots i_k$ from the language $L$ such that $M_{i_1}\cdots M_{i_k}\mathbf{x}=\mathbf{y}$.

\item {\bf Generalized reachability problem by fractional linear transformations.} Given two rational numbers $x$ and $y$, decide whether there exists a word $i_1\ldots i_k$ from $L$ such that $f_{M_{i_1}\cdots M_{i_k}}(x)=y$.
\end{itemize}
Then the above generalized reachability problems are decidable.
\end{proposition}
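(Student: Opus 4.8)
The plan is to observe that the proof of Theorem~\ref{vrpthm} never uses any special property of $\LL_{\mathit{semigr}}$ beyond the two facts that it is a regular language over $\Sigma=\{S,R\}$ and that its image under $\phi$ is exactly the set of matrices we wish to reach. Consequently, it suffices to manufacture, for the given regular language $L\subseteq\{1,\ldots,n\}^*$, a regular language over $\Sigma$ that plays the role of $\LL_{\mathit{semigr}}$, namely one whose $\phi$-image is $\{M_{i_1}\cdots M_{i_k}\,:\,i_1\ldots i_k\in L\}$.

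First I would fix, for each $i\in\{1,\ldots,n\}$, a word $w_i\in\Sigma^*$ (not necessarily reduced) with $\phi(w_i)=M_i$; such words exist because $\phi$ is surjective. This data defines a monoid homomorphism $h:\{1,\ldots,n\}^*\to\Sigma^*$ via $h(i)=w_i$. Since the class of regular languages is closed under homomorphic images, $h(L)$ is again regular over $\Sigma$. Moreover, because $\phi$ is multiplicative, $\phi(h(i_1\ldots i_k))=\phi(w_{i_1})\cdots\phi(w_{i_k})=M_{i_1}\cdots M_{i_k}$, and hence
\[
\phi(h(L))=\{M_{i_1}\cdots M_{i_k}\,:\,i_1\ldots i_k\in L\}.
\]
Thus $h(L)$ describes precisely the matrix products indexed by words of $L$, with the empty word of $L$ (if present) corresponding to the identity matrix.

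Next I would run the argument of Theorem~\ref{vrpthm} verbatim, with $\LL_{\mathit{semigr}}$ replaced by $h(L)$. For the generalized vector reachability problem I build the regular language $\LL^{\mathrm{vrp}}_{\mathbf{x},\mathbf{y}}$ exactly as there, treating the degenerate cases $\mathbf{x}=\mathbf{0}$ separately and otherwise invoking the characterization $\{BT^{t}C:t\in\Z\}$ from Theorem~\ref{regmatthm}. For the fractional-linear version I build $\LL^{\mathrm{flt}}_{x,y}$ from Theorem~\ref{mobiusthm} in the same fashion. Converting all these languages into signed languages of the form $(\,\cdot\,,\emptyset)$, the generalized vector reachability problem has a solution if and only if $\phi\big((\LL^{\mathrm{vrp}}_{\mathbf{x},\mathbf{y}},\emptyset)\big)\cap\phi\big((h(L),\emptyset)\big)\neq\emptyset$, and the fractional case reduces to $\phi\big((\LL^{\mathrm{flt}}_{x,y},\emptyset)\big)\cap\phi\big((h(L),\emptyset)\big)\neq\emptyset$; both emptiness questions are decidable by Proposition~\ref{autom}.

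I expect no genuinely hard step here: the entire content is the closure of regular languages under the letter-to-word substitution $h$, together with the fact that $\phi$ respects concatenation. The single point that warrants a line of care is that the chosen $w_i$ need not be reduced, so each matrix has many representing words; but this is precisely the situation already resolved in Theorem~\ref{vrpthm}, where Proposition~\ref{autom} (through the signed-automata construction) accounts for all such representatives. Hence no additional machinery is needed, and the generalized problems inherit decidability directly.
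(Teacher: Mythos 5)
Your proposal is correct and follows essentially the same route as the paper: the paper's proof simply notes that the regular language $L$ induces a regular subset of $\SL$ (exactly your homomorphic image $h(L)$) and then invokes the argument of Theorem~\ref{vrpthm} together with Proposition~\ref{autom}. Your version merely makes explicit the letter-to-word substitution and the closure of regular languages under homomorphisms, which the paper leaves implicit.
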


\begin{proof}
The proof of this proposition is similar to the proof of Theorem \ref{vrpthm}. Namely, it follows from the fact that a regular language $L$ defines a regular subset in $\SL$ and Proposition~\ref{autom}, where we proved that the emptiness problem for the intersection of two regular subsets in $\SL$ is decidable.
\end{proof}

As an application of Proposition \ref{genreg} let us consider the follow matrix equation
\begin{equation}
\label{eq:mat}
M_1^{x_1}\cdots M_k^{x_k}\mathbf{x}=N_1^{y_1}\cdots N_l^{y_l}\mathbf{y},
\end{equation}
where $x_1,\dots,x_k$ and $y_1,\dots,y_l$ are non-negative integers. In \cite{Babai} it was proved that if $M_1,\dots,M_k$ and $N_1,\dots,N_l$ are commuting $n\times n$ matrices over algebraic numbers and $\mathbf{x},\mathbf{y}$ are vectors with algebraic coefficients, then it is decidable in polynomial time whether the Equation (\ref{eq:mat}) has a solution. On the other hand, in \cite{BHHKP08} it was shown that there is no algorithm for solving the equation $M_1^{x_1}\cdots M_k^{x_k}=Z$, where $M_1,\dots,M_k$ are integer $n\times n$ matrices and $Z$ is the zero matrix. Using the construction of Kronecker (or tensor) product of matrices, it is possible to show that the above-mentioned result implies that Equation (\ref{eq:mat}) is algorithmically undecidable in general for non-commuting integer matrices $M_1,\dots,M_k$ and $N_1,\dots,N_l$.

However with the help of Proposition \ref{genreg} we can algorithmically solve Equation (\ref{eq:mat}) in the case when $M_1,\dots,M_k$ and $N_1,\dots,N_l$ are matrices from $\SL$ and the vectors $\mathbf{x},\mathbf{y}$ have integer coefficients. Indeed, since the matrices from $\SL$ are invertible, we can rewrite (\ref{eq:mat}) as ${(N_l^{-1})}^{y_l}\cdots {(N_1^{-1})}^{y_1}M_1^{x_1}\cdots M_k^{x_k}\mathbf{x}=\mathbf{y}$. It is not hard to see that $\{{(N_l^{-1})}^{y_l}\cdots {(N_1^{-1})}^{y_1}M_1^{x_1}\cdots M_k^{x_k} : x_1,\dots,x_k,\, y_1,\dots,y_l\in \N\cup\{0\}\,\}$ is a regular subset of $\SL$, and hence the problem is decidable. Using the same idea we can algorithmically solve Equation (\ref{eq:mat}) also in the case when $x_1,\dots,x_k$ and $y_1,\dots,y_l$ are arbitrary integers and the matrices are from $\SL$.

In the rest of this section we will give a geometric interpretation of both reachability problems (Figure \ref{fig:lin}), which we will use later to solve a special case of the scalar reachability problem (Proposition \ref{srp}).

\begin{proposition}\label{prop:uv}
According to Theorem \ref{regmatthm}, the set of matrices $M$ from $\SL$ that transform a vector $\mathbf{x}=\begin{bmatrix}x_1\\ x_2\end{bmatrix}$ to a vector $\mathbf{y}=\begin{bmatrix}y_1\\ y_2\end{bmatrix}$ has the form $\F=\{BT^{kt}C\ :\ t\in \Z\}$.

Consider the equation $BT^{kt}C\mathbf{x}=\mathbf{y}$ and let us make the following change of variables: $\mathbf{u}=C\mathbf{x}$ and $\mathbf{v}=B^{-1}\mathbf{y}${\rm :}
\quad$\mathbf{x}\xrightarrow{\ C\ }\mathbf{u}\xrightarrow{\ T^{kt}\ }\mathbf{v}\xrightarrow{\ B\ }\mathbf{y}$.
Then $\mathbf{u}=\mathbf{v}=\begin{bmatrix}d\\ 0\end{bmatrix}$, where $|d|=\gcd(x_1,x_2)=\gcd(y_1,y_2)$.
\end{proposition}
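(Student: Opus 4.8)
The plan is to exploit the fact that, by Theorem \ref{regmatthm}, \emph{every} matrix $M = BT^{kt}C$ in $\F$ satisfies $BT^{kt}C\mathbf{x} = \mathbf{y}$ for all $t\in\Z$, together with the fact that $k\neq 0$. The latter is not the content of Corollary \ref{regmatcor} (which we must not use, on pain of circularity) but rather a by-product of the proof of Proposition \ref{regmat}, where $k$ is the nonzero Smith-normal-form entry of $A_1$. Performing the substitution $\mathbf{u}=C\mathbf{x}$ and then multiplying the relation $BT^{kt}\mathbf{u}=\mathbf{y}$ on the left by $B^{-1}$, the equation collapses to
\[
T^{kt}\mathbf{u} = B^{-1}\mathbf{y} = \mathbf{v}\qquad\text{for every } t\in\Z.
\]
The crucial observation is that the right-hand side $\mathbf{v}$ is independent of $t$.

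Next I would read off what this $t$-independence forces. Writing $\mathbf{u}=\begin{bmatrix} u_1\\ u_2\end{bmatrix}$ and using $T^{kt}=\begin{bmatrix} 1 & kt\\ 0 & 1\end{bmatrix}$, I compute $T^{kt}\mathbf{u}=\begin{bmatrix} u_1+kt\,u_2\\ u_2\end{bmatrix}$. Since this vector must be constant in $t$ and $k\neq 0$, the first coordinate $u_1+kt\,u_2$ can fail to depend on $t$ only if $u_2=0$. Hence $\mathbf{u}=\begin{bmatrix} d\\ 0\end{bmatrix}$ with $d:=u_1$, and therefore $\mathbf{v}=T^{kt}\mathbf{u}=\mathbf{u}$, which already gives $\mathbf{u}=\mathbf{v}$ of the required shape.

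It then remains to identify $|d|$ with the two gcd's, and here I would simply invoke Lemma \ref{lem:gcd} twice. Since $C\in\SL$ and $C\mathbf{x}=\mathbf{u}$, the lemma gives $\gcd(x_1,x_2)=\gcd(u_1,u_2)=\gcd(d,0)=|d|$; applying it instead to the original relation $M\mathbf{x}=\mathbf{y}$ gives $\gcd(x_1,x_2)=\gcd(y_1,y_2)$. Combining the two equalities yields $|d|=\gcd(x_1,x_2)=\gcd(y_1,y_2)$, as claimed. (Note that $\mathbf{x}\neq\mathbf{0}$ guarantees $\gcd(x_1,x_2)>0$, so $d\neq 0$.)

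I do not anticipate a genuine obstacle: the argument is a short linear-algebra reduction feeding into Lemma \ref{lem:gcd}. The single point demanding care is the step $u_2=0$, which rests entirely on $k\neq 0$; I must therefore cite that fact from the proof of Proposition \ref{regmat} and avoid the tempting shortcut of using $k=1$ from Corollary \ref{regmatcor}, since that corollary is itself proved \emph{from} the present proposition.
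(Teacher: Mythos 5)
Your proof is correct and follows essentially the same route as the paper: reduce to $T^{kt}\mathbf{u}=\mathbf{v}$ for all $t$, force $u_2=0$ from the $t$-dependence of the first coordinate, and finish with Lemma \ref{lem:gcd}. You are in fact slightly more careful than the paper, which uses $k\neq 0$ implicitly without remarking that it comes from the Smith-normal-form step in Proposition \ref{regmat}, and your note about avoiding Corollary \ref{regmatcor} to prevent circularity is well taken.
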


\begin{proof}
In the new notations, the equation $BT^{kt}C\mathbf{x}=\mathbf{y}$ can be written as $T^{kt}\mathbf{u}=\mathbf{v}$, and this equality holds for any $t\in \Z$.  Now let $\mathbf{u}=\begin{bmatrix}u_1\\ u_2\end{bmatrix}$ and $\mathbf{v}=\begin{bmatrix}v_1\\ v_2\end{bmatrix}$. Hence we have $\begin{bmatrix} 1 & kt\\ 0 & 1 \end{bmatrix}\begin{bmatrix}u_1\\ u_2\end{bmatrix} = \begin{bmatrix}v_1\\ v_2\end{bmatrix}$, which is equivalent to $u_2=v_2$ and $u_1+ktu_2=v_1$, for any $t\in \Z$. So, we must have $u_2=v_2=0$ and hence $u_1=v_1$.

Therefore, the vectors $\mathbf{u}$ and $\mathbf{v}$ have the form
$\mathbf{u}=\mathbf{v}=\begin{bmatrix}d\\ 0\end{bmatrix}$ for some $d\in \Z$.
Moreover, since $\mathbf{u}=C\mathbf{x}$, we obtain from Lemma \ref{lem:gcd} that $|d|=\gcd(x_1,x_2)=\gcd(y_1,y_2)$.
\end{proof}

We can give the following geometric interpretation of the transformation $BT^{t}C\mathbf{x}=\mathbf{y}$: first, we apply $C$ to $\mathbf{x}$ and arrive at $\mathbf{u}=\begin{bmatrix}d\\ 0\end{bmatrix}$, then we loop at $\mathbf{u}$ for $t$ many times using $T$, and finally apply $B$ to move from $\mathbf{u}$ to $\mathbf{y}$ (see Figure \ref{fig:lin} on the left).
\begin{figure}[h]
\centering
\begin{tikzpicture}[shorten >=1pt,scale=1.3]
\tikzset{dots/.style={circle,draw,fill,inner sep=0pt,minimum size=4pt}}

\node (x) at (3,1) [dots,label=right:$\mathbf{x}$] {};
\node (y) at (-1,1.5) [dots,label=left:$\mathbf{y}$] {};
\node (z) at (1,0) [dots,label=above:${(d,0)}$] {};

\draw[->,semithick] (0,-1) -- (0,1.6);
\draw[->,semithick] (-1.3,0) -- (3.3,0);

\draw[->,semithick] (x) to [bend left=10] node [below,midway] {$C$} (z);
\draw[->,semithick] (z) to [bend left=15] node [auto,near end] {$B$} (y);
\draw[->,semithick] (z) .. controls (2,-1) and (0,-1) .. node [auto,pos=0.2] {$T^t$} (z);

\node at (8,0) (inf) {$\infty$};

\node (x) at (4.5,0) [dots,label=below:$x$] {};
\node (y) at (6,0) [dots,label=below:$y$] {};

\draw[->,semithick] (4,0) -- (7.5,0);

\draw[->,semithick] (inf) .. controls (7,1) and (9,1) .. node [above,midway] {$f_{T^t}$} (inf);
\draw[->,semithick] (x) to [bend left] node [above,midway] {$f_C$} (inf);
\draw[->,semithick] (inf) to [bend left] node [below,midway] {$f_B$} (y);
\end{tikzpicture}
\caption{Geometric interpretation of the linear transformation $\mathbf{y}=BT^tC\mathbf{x}$ (left) and of the fractional linear transformation $y=f_{BT^tC}(x)$ (right).}
\label{fig:lin}
\end{figure}
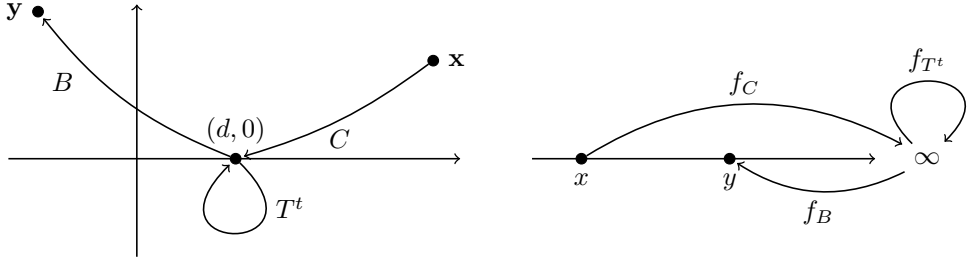

Similarly, we have the following geometric interpretation of the fractional linear transformation $y=f_{BT^tC}(x) = f_B\circ f_{T^t}\circ f_C(x)$: first it maps $x$ to $\infty$ using $f_C$, then loops at $\infty$ for $t$ many times using $f_T$, and finally maps $\infty$ to $y$ using $f_B$ (see Figure \ref{fig:lin} on the right).

\newpage
We now show how to apply the geometric interpretation of the vector reachability problem to solve a special case of the scalar reachability problem. 
\begin{definition}
The \emph{scalar reachability problem} in $\SL$ is stated as follows:
Let $[z_1, z_2]$ and $\begin{bmatrix}x_1\\ x_2\end{bmatrix}$ be vectors from $\Z\times \Z$ and let $\lambda$ be an integer number. We are also given a finite collection of matrices $M_1,\ldots,M_n$ from $\SL$. The question is to decide whether there exists a matrix $M\in \langle M_1,\ldots,M_n\rangle$ which satisfies the equation
$
[z_1, z_2]\, M\begin{bmatrix}x_1\\ x_2\end{bmatrix}=\lambda.
$
\end{definition}

We will consider a special case of this problem when $z_2=1$ and $\lambda=1$. Our proof relies on the characterization from Theorem \ref{regmatthm} and Corollary~\ref{regmatcor} and on Proposition \ref{autom} in which we showed that the emptiness problem for the intersection of two regular subsets in $\SL$ is decidable.

\begin{proposition}\label{srp}
Suppose that the above equation has the form
\begin{equation} \label{eq:sc}
[a,1]\, M\begin{bmatrix}x_1\\ x_2\end{bmatrix}=1,
\end{equation}
where $a$, $x_1$ and $x_2$ are some integer numbers. Then this special case of the scalar reachability problem is decidable.
\end{proposition}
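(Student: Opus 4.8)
The plan is to reduce this special case of the scalar reachability problem to the emptiness problem for the intersection of two regular subsets of $\SL$, so that Proposition \ref{autom} can be applied. Writing $\mathbf{z}=M\begin{bmatrix}x_1\\ x_2\end{bmatrix}=\begin{bmatrix}z_1\\ z_2\end{bmatrix}$, Equation (\ref{eq:sc}) says exactly that $\mathbf{z}$ lies on the affine line $\ell=\{\mathbf{z}\in\Z\times\Z : az_1+z_2=1\}$. First I would dispose of the degenerate cases. If $\mathbf{x}=\mathbf{0}$, then $M\mathbf{x}=\mathbf{0}$ for every $M$ and the equation has no solution. If $\gcd(x_1,x_2)=g>1$, then by Lemma \ref{lem:gcd} every image $\mathbf{z}=M\mathbf{x}$ satisfies $g\mid z_1,z_2$, hence $g\mid az_1+z_2=1$, which is impossible; so again there is no solution. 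Thus we may assume $\gcd(x_1,x_2)=1$, and in particular $\mathbf{x}\neq\mathbf{0}$.

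The key step is to recognise that, under the assumption $\gcd(x_1,x_2)=1$, the set of admissible targets forms a single orbit of a cyclic subgroup of $\SL$. Indeed, the integer points of $\ell$ are exactly $\begin{bmatrix}z_1\\ 1-az_1\end{bmatrix}$ for $z_1\in\Z$, and each such vector is primitive since any common divisor of $z_1$ and $1-az_1$ divides $1$. Hence one can exhibit a matrix $W\in\SL$ and a base point $\mathbf{z}_0=\begin{bmatrix}0\\ 1\end{bmatrix}$ such that $\ell=\{W^t\mathbf{z}_0 : t\in\Z\}$; concretely $W=\begin{bmatrix}1+a & 1\\ -a^2 & 1-a\end{bmatrix}$ works, as $\det W=1$ and $W^t\mathbf{z}_0=\begin{bmatrix}t\\ 1-at\end{bmatrix}$ by induction. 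With this, Equation (\ref{eq:sc}) is solvable by some $M$ in the semigroup if and only if there exist $t\in\Z$ and $M\in\langle M_1,\dots,M_n\rangle$ with $M\mathbf{x}=W^t\mathbf{z}_0$, equivalently $(W^{-t}M)\mathbf{x}=\mathbf{z}_0$.

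This reformulation is now in the scope of our earlier machinery. On one hand, the set $\mathcal{V}=\{P\in\SL : P\mathbf{x}=\mathbf{z}_0\}$ is, by Theorem \ref{regmatthm} and Corollary \ref{regmatcor}, either empty or of the form $\{BT^sC : s\in\Z\}$, hence a regular subset of $\SL$. On the other hand, $\{W^{-t} : t\in\Z\}$ is regular, since it corresponds to a language of the form $w^*+\bar{w}^*$ after translating $W$ and $W^{-1}$ into words over $\Sigma$, and the semigroup $\langle M_1,\dots,M_n\rangle$ corresponds to the regular language $\LL_{\mathit{semigr}}$; therefore the product $\mathcal{P}=\{W^{-t}M : t\in\Z,\ M\in\langle M_1,\dots,M_n\rangle\}$ is again a regular subset of $\SL$, because products of regular subsets translate into concatenations of regular languages. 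The equation has a solution if and only if $\mathcal{P}\cap\mathcal{V}\neq\emptyset$, and this is decidable by Proposition \ref{autom} after converting the two regular subsets into signed languages, placing all words in the positive part so that $\phi$ returns the matrices themselves.

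The step I expect to be the real obstacle is the one carried out in the second paragraph: a priori the target vector $\mathbf{z}$ ranges over the infinite set $\ell$, so the problem is a union of infinitely many vector reachability instances and Theorem \ref{regmatthm} cannot be invoked directly. The twin facts that make this manageable are special to the case $z_2=1$, $\lambda=1$: the gcd argument forces every point of $\ell$ to be primitive, so no target has to be discarded, and $\ell$ is a genuine $\SL$-orbit, which lets the quantifier over targets be absorbed into the single extra factor $W^{-t}$ while staying inside the class of regular subsets of $\SL$. For general $z_2$ and $\lambda$ the admissible targets need no longer be primitive nor form a single orbit, which is why only this special case is claimed.
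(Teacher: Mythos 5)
Your proof is correct, and it takes a genuinely different route from the paper's. Both arguments start the same way: rewrite the equation as a vector-to-line reachability problem and use Lemma \ref{lem:gcd} together with the primitivity of the integer points of the line $ay_1+y_2=1$ to reduce to the case $\gcd(x_1,x_2)=1$. They diverge in how the infinite family of targets is handled. The paper keeps that quantifier on the solution-set side: it fixes a single $C$ with $C\mathbf{x}=\begin{bmatrix}1\\ 0\end{bmatrix}$ (justified by Lemma \ref{lem:ind}), exhibits an explicit $B_{\mathbf{y}}$ with entries linear in $y_1$ carrying $\begin{bmatrix}1\\ 0\end{bmatrix}$ to each point $\mathbf{y}$ of the line, and applies Proposition \ref{regmat} to write $B_{\mathbf{y}}=AT^{ky_1}D$, so that the full solution set $\{AT^{ky_1}DT^tC : y_1,t\in\Z\}$ is itself regular. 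You instead observe that the line is a single orbit $\{W^t\mathbf{z}_0 : t\in\Z\}$ of the parabolic element $W$ (your $W$ checks out: $\det W=1$, $\operatorname{tr} W=2$, and $W^t\mathbf{z}_0=\begin{bmatrix}t\\ 1-at\end{bmatrix}$), and you move $W^{-t}$ to the other side, reducing everything to the single vector reachability instance $P\mathbf{x}=\mathbf{z}_0$ intersected with the product set $\{W^{-t}:t\in\Z\}\cdot\langle M_1,\ldots,M_n\rangle$. This buys a somewhat shorter argument: you need neither Lemma \ref{lem:ind} nor a fresh application of Proposition \ref{regmat}, only the fact (already implicit in the paper's constructions) that concatenation of positive signed languages realises products of regular subsets of $\SL$. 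The two constructions are ultimately two faces of the same geometry --- the paper's family $\{AT^{ky_1}D : y_1\in\Z\}$ is a coset of the conjugate cyclic group $\langle AT^kA^{-1}\rangle$ translating points along $L$, which plays exactly the role of your $\langle W\rangle$ --- and both correctly isolate why the special case $z_2=\lambda=1$ is needed.
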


\begin{proof}
The general idea of the proof is the same as in Theorem \ref{vrpthm}, that is, we will show that the set of matrices $M\in \SL$ that satisfy Equation (\ref{eq:sc}) can be described by a regular language.
First, let us consider a geometric interpretation of this problem. We can rewrite Equation~(\ref{eq:sc}) as a system of two equations:
$M\begin{bmatrix}x_1\\ x_2\end{bmatrix} = \begin{bmatrix}y_1\\ y_2\end{bmatrix}$ and
$ay_1+y_2 = 1 $.
So, $M$ satisfies Equation (\ref{eq:sc}) if and only if it maps a fixed vector $\mathbf{x}=\begin{bmatrix}x_1\\ x_2\end{bmatrix}$ to some vector $\mathbf{y}=\begin{bmatrix}y_1\\ y_2\end{bmatrix}$ that lies on the line $L$ described by the equation
$
ay_1+y_2 = 1.
$
In other words, we have a \emph{vector to line reachability problem} for the line $L$ that is defined by the equation $ay_1+y_2 = 1$.

Note that if a vector $\mathbf{y}$ lies of the line $ay_1+y_2 = 1$, then $\gcd(y_1,y_2)=1$. Hence by Lemma \ref{lem:gcd}, Equation (\ref{eq:sc}) has a solution only if $\gcd(x_1,x_2)=1$. So, from now on we assume that $\gcd(x_1,x_2)=1$.

By Corollary \ref{regmatcor}, any $M\in \SL$ that maps $\mathbf{x}$ to a vector $\mathbf{y}$ on the line $L$ has the form $M=BT^tC$, where $B$ and $C$ are some matrices from $\SL$ and $t\in \Z$. Geometrically, the transformation $\mathbf{y}=BT^tC\mathbf{x}$ goes via the point $(1,0)$ as shown in Figure~\ref{fig:scalar}.

Note that the matrices $B$ and $C$ above depend on the vector $\mathbf{y}$ as a parameter. Here we prove a useful lemma which will imply that we can choose only one matrix $C$ that maps $\mathbf{x}$ to $\begin{bmatrix}1\\ 0\end{bmatrix}$ independently of the vector $\mathbf{y}$.

\setcounter{thm3}{\value{theorem}}
\begin{lemma} \label{lem:ind}
Let $\mathbf{x}=\begin{bmatrix}x_1\\ x_2\end{bmatrix}$ and $\mathbf{y}=\begin{bmatrix}y_1\\ y_2\end{bmatrix}$ be any vectors from $\Z\times \Z$ such that $\gcd(x_1,x_2)=\gcd(y_1,y_2)=d$. Let $d_1$ and $d_2$ be any integer numbers with $|d_1|=|d_2|=d$ and let $A_1$, $B_1$ and $A_2$, $B_2$ by any matrices from $\SL$ such that $B_i\mathbf{x}=\begin{bmatrix}d_i\\ 0\end{bmatrix}$ and $A_i\begin{bmatrix}d_i\\ 0\end{bmatrix}= \mathbf{y}$, for $i=1,2$. Then
$
\{A_1T^tB_1\ :\ t\in \Z\} = \{A_2T^tB_2\ :\ t\in \Z\}.
$
In other words, the following diagrams define the same set of matrices that map $\mathbf{x}$ to $\mathbf{y}$.
\begin{center}
\begin{tikzpicture}[scale=1.3]
\node at (0,0) (x) {$\mathbf{x}$};
\node at (1,0) (u) {$\left[\!\begin{smallmatrix}d_1\\ 0\end{smallmatrix}\!\right]$};
\node at (2,0) (y) {$\mathbf{y}$};

\draw[->] (x) -- (u) node [above,midway] {$B_1$};
\draw[->] (u) -- (y) node [above,midway] {$A_1$};
\draw[->] (u) .. controls (1.5,-0.9) and (0.5,-0.9) .. node [auto,pos=0.15] {$T^t$} (u);

\node at (3,0) (x1) {$\mathbf{x}$};
\node at (4,0) (u1) {$\left[\!\begin{smallmatrix}d_2\\ 0\end{smallmatrix}\!\right]$};
\node at (5,0) (y1) {$\mathbf{y}$};

\draw[->] (x1) -- (u1) node [above,midway] {$B_2$};
\draw[->] (u1) -- (y1) node [above,midway] {$A_2$};
\draw[->] (u1) .. controls (4.5,-0.9) and (3.5,-0.9) .. node [auto,pos=0.15] {$T^t$} (u1);
\end{tikzpicture}
\end{center}
\end{lemma}

\begin{proof}
See Section \ref{sec:ind} of the Appendix.
\end{proof}

By Lemma \ref{lem:ind}, we can choose any matrix $C$ from $\SL$ that maps a vector $\mathbf{x}$ to the vector $\begin{bmatrix}1\\ 0\end{bmatrix}$, and for each $\mathbf{y}$ on the line $L$ we can choose any matrix $B_\mathbf{y}$ that maps $\begin{bmatrix}1\\ 0\end{bmatrix}$ to the vector $\mathbf{y}$. Then the solution of Equation (\ref{eq:sc}) will be described by the following set
$
\F=\{B_\mathbf{y}T^tC\ :\ \mathbf{y}\in L \text{ and } t\in \Z\}.
$
Figure \ref{fig:scalar} gives geometric interpretation of this solution.
\begin{figure}[h]
\centering
\begin{tikzpicture}[shorten >=1pt,scale=1.3]
\tikzset{dots/.style={circle,draw,fill,inner sep=0pt,minimum size=4pt}}

\draw[->,semithick] (0,-1) -- (0,2);
\draw[->,semithick] (-1.5,0) -- (4.5,0);
\draw[thick,blue] (-0.8,2) -- (1,-1);

\node (x) at (4,1) [dots,label=right:$\mathbf{x}$] {};
\node (y) at (-0.5,1.5) [dots,label=left:$\mathbf{y}$] {};
\node (z) at (1.5,0) [dots,label=below:${(1,0)}$] {};

\node at (0.5,-0.5) {$L$};

\draw[->,semithick] (x) to node [below,midway] {$C$} (z);
\draw[->,semithick] (z) to node [below,midway] {$B_\mathbf{y}$} (y);
\draw[->,semithick] (z) .. controls (0.5,1.5) and (2.5,1.5).. node [auto,pos=0.7] {$T^t$} (z);

\end{tikzpicture}
\caption{Geometric interpretation of the scalar reachability problem.}
\label{fig:scalar}
\end{figure}
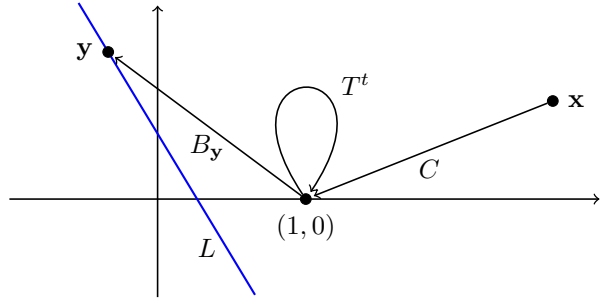

We need to choose $B_\mathbf{y}$ in such a way that $\F$ becomes a regular set. Let $\mathbf{y}=\begin{bmatrix}y_1\\ y_2\end{bmatrix}\in L$, then we have $ay_1+y_2=1$. As one can check, if we let $B_\mathbf{y}=\begin{bmatrix}y_1 & -1\\ -ay_1+1 & a\end{bmatrix}$ then $B_\mathbf{y}\in \SL$ and $B_\mathbf{y}\begin{bmatrix}1\\ 0\end{bmatrix} = \mathbf{y}$.

Since every entry of $B_\mathbf{y}$ is a linear function of $y_1$, we obtain by Proposition \ref{regmat} that $B_\mathbf{y} = AT^{ky_1}D$, where $A$ and $D$ are some matrices from $\SL$ and $k$ is some integer number (in fact, one can show that $k=1$). Finally, we can write all solutions of Equation~(\ref{eq:sc}) as
$
\F=\{AT^{ky_1}DT^tC\ :\ y_1\in \Z \text{ and } t\in \Z\}
$.
This is clearly a regular set and, therefore, the scalar reachability problem is decidable.
\end{proof}

\bibliography{refs}

\appendix
\section{Proof of Proposition \ref{linmat}}\label{sec:linmat}

For the proof of Proposition \ref{linmat} we will need the following two lemmas.

\begin{lemma} \label{lineq}
Consider a linear congruence equation $ax\equiv b \pmod n$. If $\gcd(a,n) \nmid b$, then the equation has no solution. If $\gcd(a,n) \mid b$, then all solutions of the equation can be written in the form $x\equiv c \pmod {\frac{n}{\gcd(a,n)}}$ for some $c$. Moreover, there is a polynomial time algorithm that determines whether such equation has a solution and if so, finds it.
\end{lemma}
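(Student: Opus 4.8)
The plan is to reduce the general congruence to the coprime case by dividing through by $g=\gcd(a,n)$, and then to exploit the invertibility of $a/g$ modulo $n/g$. This is the standard structure theorem for linear congruences, so the work is in organizing the three claims cleanly and in checking that each step is effective.

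First I would settle the ``no solution'' case. If $ax\equiv b\pmod n$ holds for some $x\in\Z$, then $ax-b$ is a multiple of $n$, so $b=ax-kn$ for some $k\in\Z$. Since $g\mid a$ and $g\mid n$, it follows that $g\mid b$. Contrapositively, if $g\nmid b$ then the equation has no solution.

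Next, assuming $g\mid b$, I would write $a=ga'$, $b=gb'$, $n=gn'$ and record the standard fact that $\gcd(a',n')=1$. The key step is the equivalence
\[
ax\equiv b\pmod n\iff a'x\equiv b'\pmod{n'},
\]
which follows from the chain $n\mid(ax-b)\iff gn'\mid g(a'x-b')\iff n'\mid(a'x-b')$. Since $\gcd(a',n')=1$, the residue $a'$ is invertible modulo $n'$, so multiplying by its inverse ${a'}^{-1}$ gives $x\equiv {a'}^{-1}b'\pmod{n'}$. Setting $c={a'}^{-1}b'\bmod n'$ shows that the full solution set is exactly the single residue class $x\equiv c\pmod{n/g}$, as claimed.

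Finally, for the polynomial-time claim, I would observe that every ingredient above is produced by the extended Euclidean algorithm: computing $g$ together with a B\'ezout identity $ua'+vn'=1$ simultaneously yields the divisibility test $g\mid b$ and the inverse ${a'}^{-1}\equiv u\pmod{n'}$, after which $c$ is obtained by one modular multiplication. All of these run in time polynomial in the bit lengths of $a$, $b$, $n$. The only point requiring care---rather than a genuine obstacle---is the congruence equivalence after dividing by $g$: one must verify $\gcd(a',n')=1$ and note that it is precisely the division of the modulus by $g$ that collapses the solution set from $g$ residue classes modulo $n$ into a single class modulo $n/g$.
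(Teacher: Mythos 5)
Your proposal is correct and follows essentially the same route as the paper: test $\gcd(a,n)\mid b$, divide through by the gcd, and use the B\'ezout coefficients from the extended Euclidean algorithm to invert $a'$ modulo $n'$ and obtain the single residue class modulo $n/\gcd(a,n)$. The paper's proof is exactly this computation, with $u$ from $d=ua+vn$ serving as the inverse of $a'$ modulo $n'$.
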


\begin{proof}
Given $a$ and $n$, using Euclidean algorithm we can find in polynomial time $d=\gcd(a,n)$ and integer numbers $u$ and $v$ such that $d=ua+vn$. Equation $ax\equiv b \pmod n$ can be written as $ax=b+kn$, where $k\in \Z$. It is clear that if $d\nmid b$, then there is no solution. Otherwise, let $b=b'd$, $a=a'd$, and $n=n'd$. Then our equation is equivalent to $a'x\equiv b' \pmod {n'}$. Furthermore, we have $ua'+vn'=1$ and hence $ua'\equiv 1\pmod {n'}$. Thus
\[
x\equiv (ua')x\equiv u(a'x)\equiv ub'\pmod {n'}.
\]
Note that all these computations can be done in PTIME.
\end{proof}

\begin{lemma} \label{linsys}
Consider a system of two linear congruence equations
\begin{align}
\label{sys}
a_1x \equiv b_1 \pmod {n_1}&&
a_2x \equiv b_2 \pmod {n_2}
\end{align}
Such system either has no solution, or all its solutions are of the form $x\equiv c\pmod n$ for some $c$ and $n\mid n_1n_2$. Moreover, there is a polynomial time algorithm that determines whether (\ref{sys}) has a solution and if so, finds it.
\end{lemma}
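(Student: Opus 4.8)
The plan is to reduce each of the two congruences separately to a single simple congruence of the form $x\equiv c_i \pmod{m_i}$ by invoking Lemma~\ref{lineq}, and then to merge these two simplified congruences into one by a generalized Chinese Remainder argument.

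First I would apply Lemma~\ref{lineq} to the equation $a_1x\equiv b_1\pmod{n_1}$. If $\gcd(a_1,n_1)\nmid b_1$, then this congruence, and hence the whole system, has no solution and we are done. Otherwise Lemma~\ref{lineq} produces an equivalent congruence $x\equiv c_1\pmod{m_1}$ with $m_1=n_1/\gcd(a_1,n_1)$, where $c_1$ and $m_1$ are computable in polynomial time. Doing the same for the second equation yields $x\equiv c_2\pmod{m_2}$ with $m_2=n_2/\gcd(a_2,n_2)$, or else certifies that the system is unsolvable. Note in particular that $m_1\mid n_1$ and $m_2\mid n_2$.

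The heart of the argument is to combine the pair
\[
x\equiv c_1\pmod{m_1},\qquad x\equiv c_2\pmod{m_2}
\]
into a single congruence. Writing $x=c_1+m_1k$ from the first equation and substituting into the second, we obtain the linear congruence $m_1k\equiv c_2-c_1\pmod{m_2}$ in the unknown $k$. Applying Lemma~\ref{lineq} once more, this is solvable if and only if $g:=\gcd(m_1,m_2)$ divides $c_2-c_1$; when it does, its solutions form a single residue class $k\equiv k_0\pmod{m_2/g}$, and back-substituting gives $x\equiv c_1+m_1k_0\pmod{m_1m_2/g}$. Setting $c:=c_1+m_1k_0$ and $n:=m_1m_2/g=\lcm(m_1,m_2)$ yields the desired single congruence $x\equiv c\pmod{n}$. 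Finally, since $m_1\mid n_1$ and $m_2\mid n_2$, we have $n=\lcm(m_1,m_2)\mid m_1m_2\mid n_1n_2$, which establishes the divisibility claim $n\mid n_1n_2$.

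The step that most needs care is the non-coprime Chinese Remainder merge, both for correctness (the exact solvability condition $g\mid c_2-c_1$ and the fact that the merged class has modulus $\lcm(m_1,m_2)$) and for the polynomial-time bound. The latter follows because every operation used — the $\gcd$ computations, the extended Euclidean coefficients, and the two applications of Lemma~\ref{lineq} — runs in polynomial time, and the sizes of $m_1,m_2,c_1,c_2$, and hence of $c$ and $n$, remain polynomially bounded in the size of the input.
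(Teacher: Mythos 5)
Your proof is correct and follows essentially the same route as the paper: reduce each congruence to the form $x\equiv c_i\pmod{m_i}$ via Lemma~\ref{lineq}, then merge the two by a non-coprime Chinese Remainder argument whose solvability criterion is $\gcd(m_1,m_2)\mid c_2-c_1$ and whose merged modulus is $\lcm(m_1,m_2)$, which divides $n_1n_2$. The only cosmetic difference is that you package the merge as a second invocation of Lemma~\ref{lineq} applied to $m_1k\equiv c_2-c_1\pmod{m_2}$, whereas the paper first argues via a set-intersection count that the two residue classes meet in at most one class modulo $\lcm(m_1,m_2)$ and then finds it by solving $c_1+km_1'=c_2+lm_2'$ with the extended Euclidean algorithm; the underlying computation is identical.
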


\begin{proof}
Using the algorithm of Lemma \ref{lineq}, we can solve each equation separately. If one of them does not have a solution, then the system (\ref{sys}) also has no solution. Suppose the first and second equation have the solutions $x\equiv c_1 \pmod {n_1'}$ and $x\equiv c_2 \pmod {n_2'}$, respectively, which can be found in PTIME. Note that $n_i'\mid n_i$ for $i=1,2$.

Let $n=\lcm(n_1',n_2')$. We can rewrite the solutions as
\[
\begin{split}
x &\equiv c_1,\ c_1+n_1',\ c_1+2n_1',\ \ldots,\ c_1+(n_2''-1)n_1' \pmod n,\\
x &\equiv c_2,\ c_2+n_2',\ c_2+2n_2',\ \ldots,\ c_2+(n_1''-1)n_2' \pmod n,
\end{split}
\]
where $n_1''=n/n_2'$ and $n_2''=n/n_1'$. Let
\[
\begin{split}
A_1 &= \{c_1,\ c_1+n_1',\ c_1+2n_1',\ \ldots,\ c_1+(n_2''-1)n_1'\},\\
A_2 &= \{c_2,\ c_2+n_2',\ c_2+2n_2',\ \ldots,\ c_2+(n_1''-1)n_2'\}.
\end{split}
\]
Note that $A_1\cap A_2$ contains at most one element. Indeed, if $c,c'\in A_1\cap A_2$, then $n_1'\mid c-c'$ and $n_2'\mid c-c'$. Hence $n=\lcm(n_1',n_2')\mid c-c'$. Since $|c-c'|<n$, we have $c=c'$.

Now if $A_1\cap A_2$ is empty, then (\ref{sys}) has no solution. If $A_1\cap A_2 = \{c\}$, then the solution of (\ref{sys}) is $x\equiv c\pmod n$. To find this solution in PTIME, observe the following. The equations $x\equiv c_1 \pmod {n_1'}$ and $x\equiv c_2 \pmod {n_2'}$ are equivalent to $x=c_1+kn_1'$ and $x=c_2+ln_2'$, respectively, where $k,l\in \Z$. To find the intersection of these solutions we set $c_1+kn_1'=c_2+ln_2'$, which is equivalent to $c_1-c_2=ln_2'-kn_1'$. Using Euclidean algorithm, we can find in PTIME $d=\gcd(n_1',n_2')$ and integer numbers $u,v$ such that
\begin{equation} \label{eq:gcd}
d=un_1'+vn_2'.
\end{equation}
Obviously, if $d\nmid c_1-c_2$, then there is no solution. So suppose $c_1-c_2=hd$, for some $h\in \Z$. Multiplying (\ref{eq:gcd}) by $h$ we obtain
\[
c_1-c_2 = hd = (hu)n_1'+(hv)n_2'\quad \text{or}\quad
c_1-(hu)n_1' = c_2+(hv)n_2'.
\]
Let $c$ be the number in the set $\{0,\ldots,n-1\}$ such that
\[
c\equiv c_1-(hu)n_1' = c_2+(hv)n_2' \pmod n.
\]
Then $x\equiv c\pmod n$ is the desired solution. It is not hard to see that the above algorithm runs in polynomial time.
\end{proof}

\setcounter{thmold}{\value{theorem}}
\setcounter{theorem}{\value{thm1}}
\begin{proposition}
  Let $\mathbf{x}=\begin{bmatrix}x_1\\ x_2\end{bmatrix}$ and $\mathbf{y}=\begin{bmatrix}y_1\\ y_2\end{bmatrix}$ be two vectors from $\Z\times\Z$, such that $\mathbf{x}$ is not equal to the zero vector\/ $\mathbf{0}$, and consider the matrix equation $M\mathbf{x}=\mathbf{y}$, where $M$ is an unknown matrix from $\SL$. Then either this equation does not have a solution or all its solutions are given by $M=tA_1+A_2$, where $t$ is any integer number, $A_1,A_2$ are some matrices from $\Z^{2\times 2}$ such that $A_1$ is a nonzero matrix. Moreover, there is a polynomial time algorithm that determines whether such equation has a solution and if so, finds it.
\end{proposition}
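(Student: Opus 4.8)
The plan is to parametrize the full solution set directly. Write $M=\begin{bmatrix} a & b\\ c & d\end{bmatrix}$ and put $g=\gcd(x_1,x_2)$. The equation $M\mathbf{x}=\mathbf{y}$ splits, row by row, into two independent linear Diophantine equations
\[
a x_1 + b x_2 = y_1, \qquad c x_1 + d x_2 = y_2,
\]
to which we adjoin the determinant constraint $ad-bc=1$. By Lemma~\ref{lem:gcd}, a necessary condition for any solution in $\SL$ is $\gcd(x_1,x_2)=\gcd(y_1,y_2)$; I would first test this equality with the Euclidean algorithm and return ``no solution'' if it fails. From here on assume $g=\gcd(x_1,x_2)=\gcd(y_1,y_2)$, so in particular $\mathbf{y}\neq\mathbf{0}$.

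First I would solve each row equation on its own. Since $\mathbf{x}\neq\mathbf{0}$, the pair $(x_2/g,\,-x_1/g)$ is a well-defined nonzero integer generator of the homogeneous solutions of $ax_1+bx_2=0$, so the extended Euclidean algorithm yields particular solutions $(a_0,b_0)$ and $(c_0,d_0)$ and lets me write the general row solutions as
\[
(a,b)=(a_0,b_0)+s\,(x_2/g,\,-x_1/g),\qquad (c,d)=(c_0,d_0)+u\,(x_2/g,\,-x_1/g),
\]
with free integers $s,u$. An advantage of this parametrization is that it needs no case split according to whether $x_1$ or $x_2$ vanishes.

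The decisive step is substituting these expressions into $ad-bc=1$. The quadratic $su$-terms cancel, and using $a_0x_1+b_0x_2=y_1$ and $c_0x_1+d_0x_2=y_2$ the linear coefficients collapse, leaving a single linear Diophantine equation
\[
\frac{y_2}{g}\,s-\frac{y_1}{g}\,u = 1-(a_0d_0-b_0c_0).
\]
Because $\gcd(y_1/g,\,y_2/g)=1$, this equation is unconditionally solvable, and its solutions form the one-parameter family $s=s_0+(y_1/g)t$, $u=u_0+(y_2/g)t$ for $t\in\Z$. Back-substituting gives exactly $M=tA_1+A_2$ with
\[
A_1=\frac{1}{g^2}\,\mathbf{y}\begin{bmatrix} x_2 & -x_1\end{bmatrix}
=\frac{1}{g^2}\begin{bmatrix} x_2y_1 & -x_1y_1\\ x_2y_2 & -x_1y_2\end{bmatrix}.
\]
I would then verify the two clauses of the statement: $A_1$ has integer entries, since $g$ divides each of $x_1,x_2$ and each of $y_1,y_2$, so $g^2$ divides every entry; and $A_1\neq 0$, since $\mathbf{y}\neq\mathbf{0}$ and $(x_2,-x_1)\neq\mathbf{0}$. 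Conversely, any $M\in\SL$ with $M\mathbf{x}=\mathbf{y}$ satisfies both row equations and hence has the form above for some $s,u$, and $\det M=1$ then forces $(s,u)$ into the same family; so the description is exact.

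I expect the only real work to be bookkeeping rather than conceptual: confirming the cancellation of the cross term, checking the integrality of $A_1$ and $A_2$, and justifying that $\gcd(x_1,x_2)=\gcd(y_1,y_2)$ is not merely necessary (Lemma~\ref{lem:gcd}) but, via $\gcd(y_1/g,y_2/g)=1$, makes the reduced determinant equation solvable for every right-hand side. Polynomial-time complexity is then immediate, as every step is one application of the (extended) Euclidean algorithm to integers no larger than the input.
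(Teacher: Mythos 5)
Your proof is correct, but it takes a genuinely different route from the paper's. The paper eliminates variables: assuming WLOG $x_1\neq 0$ and $y_1\neq 0$ it expresses $a$, $c$, $d$ as rational functions of a single free parameter $b$, and then the integrality requirement turns into a system of three linear congruences in $b$, which is solved by a CRT-style argument (Lemma~\ref{linsys}, applied twice). You instead parametrize each row equation as a linear Diophantine equation with homogeneous generator $(x_2/g,-x_1/g)$, and push the determinant constraint onto the two free parameters $(s,u)$; the cancellation of the $su$ cross term (which I checked: the two terms $\mp su\,x_1x_2/g^2$ do cancel, and the linear coefficients collapse to $sy_2/g-uy_1/g$ via the row identities) leaves a single equation $s(y_2/g)-u(y_1/g)=1-(a_0d_0-b_0c_0)$ that is unconditionally solvable because $\gcd(y_1/g,y_2/g)=1$. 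What your approach buys: no case split on which of $x_1,x_2$ (or $y_1,y_2$) vanishes, no congruence machinery, a proof that the gcd condition of Lemma~\ref{lem:gcd} is also \emph{sufficient} for solvability, and an explicit closed form $A_1=\tfrac{1}{g^2}\mathbf{y}\begin{bmatrix}x_2 & -x_1\end{bmatrix}$ whose rank-one structure makes $\det A_1=0$ immediate (a fact the paper has to rederive at the start of Proposition~\ref{regmat}). What the paper's route buys is mainly a self-contained PTIME congruence solver (Lemmas~\ref{lineq} and~\ref{linsys}) stated in a reusable form; both arguments are polynomial time. The only points worth making fully explicit in a write-up are the ones you already flag: solvability of each row equation requires $g\mid y_1$ and $g\mid y_2$, which is exactly what the gcd test guarantees, and the completeness direction (every $\SL$ solution lies in the family) follows because the row parametrizations are exhaustive and the reduced determinant equation's solution set is a single coset of $\Z(y_1/g,y_2/g)$.
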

\setcounter{theorem}{\value{thmold}}

\begin{proof}
Let $M=\begin{bmatrix} a & b\\ c & d\end{bmatrix}$ be a matrix that satisfies the equations $M\mathbf{x}=\mathbf{y}$ and $\det(M)=1$. So we have the following system of equations:
\begin{align}\label{eq1}
 x_1a+x_2b = y_1 &&
 x_1c+x_2d = y_2 &&
 ad-bc = 1
\end{align}

Recall that by assumption $\mathbf{x}\neq \mathbf{0}$. Without loss of generality, suppose that $x_1\neq 0$. In this case we have
$a = \dfrac{y_1-x_2b}{x_1}$, $c = \dfrac{y_2-x_2d}{x_1}$.
Substituting these values for $a$ and $c$ into the equation $ad-bc = 1$, we obtain
$
  (y_1-x_2b)d-(y_2-x_2d)b=x_1
$
or, equivalently, $y_1d-y_2b = x_1$. If $y_1=y_2=0$, then there is no solution because by assumption $x_1\neq 0$. Again, without loss of generality, assume that $y_1\neq 0$. Hence $d = \dfrac{x_1+y_2b}{y_1}$. If we choose $b$ as a free parameter, then the general solution of the system of equations (\ref{eq1}) will be:
\[
   a = \frac{y_1-x_2b}{x_1},\quad  d = \frac{x_1+y_2b}{y_1},\quad
   c = \frac{y_2-x_2\frac{x_1+y_2b}{y_1}}{x_1} = \frac{y_1y_2-x_1x_2-x_2y_2b}{x_1y_1}.
\]
We are interested only in integer solutions, that is, when $a$, $c$, and $b$ are in $\Z$, which means that $b$ must satisfy the following congruences:
\begin{align*}
  x_2b \equiv y_1\ (\bmod\ {x_1})&&
  y_2b \equiv -x_1\ (\bmod\ {y_1})&&
  x_2y_2b \equiv y_1y_2-x_1x_2\ (\bmod\ {x_1y_1})
\end{align*}
Applying the algorithm from Lemma \ref{linsys} two times, we can determine in PTIME whether the above system has a solution or not. If the solution exists, the algorithm outputs it in the form $b\equiv b_2 \pmod {b_1}$, where $b_1\mid x_1y_1$.

So, the coefficient $b$ is of the form $b = b_1t+b_2$, where $t\in \Z$. Substituting this expression for $b$ in the formulas for $a$, $c$, and $d$ we obtain:
\[
  \begin{split}
   a &= \frac{y_1-x_2b_2-x_2b_1t}{x_1} = a_1t+a_2,\qquad
   d = \frac{x_1+y_2b_2+y_2b_1t}{y_1} = d_1t+d_2,\\
   c &= \frac{y_1y_2-x_1x_2-x_2y_2b_2-x_2y_2b_1t}{x_1y_1} = c_1t+c_2,
  \end{split}
\]
where $a_i$, $c_i$, and $d_i$, for $i=1,2$, are some constants which are necessarily in $\Z$ because if we let $t=0$ or $t=1$ in the above expressions they must evaluate to integer numbers. Therefore, the solution to the system of equations (\ref{eq1}) can be written as:\\
\null\qquad
$
  M = \begin{bmatrix} a_1t+a_2 & b_1t+b_2\\ c_1t+c_2 & d_1t+d_2 \end{bmatrix} = 
  t\begin{bmatrix} a_1 & b_1\\ c_1 & d_1 \end{bmatrix} + 
   \begin{bmatrix} a_2 & b_2\\ c_2 & d_2 \end{bmatrix}\!,
$\\
where $t$ is any integer number. To complete the proof we set $A_1=\begin{bmatrix} a_1 & b_1\\ c_1 & d_1 \end{bmatrix}$ and $A_2=\begin{bmatrix} a_2 & b_2\\ c_2 & d_2 \end{bmatrix}$. Note that $A_1$ is a nonzero matrix since at least one of its coefficients, namely $b_1$, is not equal to zero. Furthermore, the above algorithm runs in polynomial time because the only nontrivial step is to solve the system of linear congruence equations, which according to Lemma \ref{linsys} can be done in PTIME.
\end{proof}

\section{Proof of Proposition \ref{autom}}\label{sec:autom}

\setcounter{thmold}{\value{theorem}}
\setcounter{theorem}{\value{thm2}}
\begin{proposition}
There is an algorithm that for any given regular signed languages $L_1$ and~$L_2$ over the alphabet $\Sigma$, decides whether $\phi(L_1)\cap\phi(L_2)$ is empty or not. 
\end{proposition}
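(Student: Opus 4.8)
The plan is to reduce the question of whether two regular subsets of $\SL$ intersect to an emptiness-of-intersection question for ordinary regular languages of words, exploiting the normal form of Theorem~\ref{thm:uniq}. The whole difficulty is that a single matrix $M\in\SL$ has infinitely many preimages under $\phi$, so $\phi(L_1)\cap\phi(L_2)\neq\emptyset$ cannot be read off from $L_1\cap L_2$: the two languages may describe the same matrix by different (non-reduced) words. The remedy is to saturate each automaton so that, in addition to what it already accepts, it accepts the unique reduced word of every matrix it represents, carrying the correct sign; then a common matrix surfaces as a common reduced word.

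Concretely, I would first take signed automata $\A_1,\A_2$ with $L(\A_i)=L_i$ and pass to the sign-augmented automata on the state set $Q_i\times\{+1,-1\}$, where a letter edge $q\xrightarrow{a}q'$ preserves the sign component and the initial states are the $(q_0,+1)$ with $q_0\in I_i$. The key construction is an $\varepsilon$-closure built on the two defining relations $\phi(SS)=\phi(RRR)=-I$: whenever there is a run (possibly already using previously added $\varepsilon$-edges) from $(q_1,s_1)$ to $(q_2,s_2)$ whose letter-sequence is exactly $SS$ or exactly $RRR$, I add a sign-flipping $\varepsilon$-transition short-circuiting it, and I iterate until no new transition can be added. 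Since $-I$ is central, the accumulated sign is well defined regardless of where the cancellations happen, and since there are at most $(2|Q_i|)^2$ possible $\varepsilon$-edges and we only ever add edges, the process terminates. Finally I redefine the accepting sets so that reaching $(q,s)$ with $q\in F^+$ represents $s\,\phi(u)$ and with $q\in F^-$ represents $-s\,\phi(u)$; reading off the sign of $\phi(u)$ this gives $\wt F^+=\{(q,+1):q\in F^+\}\cup\{(q,-1):q\in F^-\}$ and symmetrically for $\wt F^-$.

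The correctness argument then splits into two properties. For \emph{soundness}, replacing each added $\varepsilon$-edge by the run it short-circuits, recursively unfolding earlier $\varepsilon$-edges down to letter-runs of $\A_i$, turns any accepting run of $\wt{\A}_i$ into an accepting run of $\A_i$ representing the same matrix; the sign convention above is exactly what makes the represented matrix invariant under one cancellation, because $\phi(SS)=\phi(RRR)=-I$. Hence $\phi(L(\wt{\A}_i))=\phi(L(\A_i))=\phi(L_i)$. For \emph{completeness}, starting from any word $u$ accepted by $\A_i$ and repeatedly deleting a factor $SS$ or $RRR$, cascading when a deletion creates a new such factor, I reach the reduced word $w$ of the represented matrix; each deletion is realised by an added $\varepsilon$-edge, so after closure $\wt{\A}_i$ accepts $w$, and the final-set convention records the correct sign. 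Thus for every $M\in\phi(L_i)$ its unique reduced word $w$, with $M=\phi(w)$ or $M=-\phi(w)$ by Theorem~\ref{thm:uniq}, lies in $L(\wt{\A}_i)^{+}$ or $L(\wt{\A}_i)^{-}$ according to the sign.

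Putting these together, I claim $\phi(L_1)\cap\phi(L_2)\neq\emptyset$ if and only if $L(\wt{\A}_1)^{+}\cap L(\wt{\A}_2)^{+}\neq\emptyset$ or $L(\wt{\A}_1)^{-}\cap L(\wt{\A}_2)^{-}\neq\emptyset$. For the backward direction a common word $w$ in, say, the $+$ components yields $\phi(w)$ in both $\phi(L(\wt{\A}_1))$ and $\phi(L(\wt{\A}_2))$, hence in $\phi(L_1)\cap\phi(L_2)$ by soundness; for the forward direction a common matrix $M$ has, by Theorem~\ref{thm:uniq}, a unique reduced word $w$ and sign $\epsilon$, and completeness places $w$ in both $\epsilon$-components. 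Each $L(\wt{\A}_i)^{\pm}$ is an ordinary regular language, obtained from the $\varepsilon$-NFA by the usual $\varepsilon$-elimination and projection onto the relevant final set, so the two emptiness tests are decidable by the standard product-and-reachability construction. The main obstacle I expect is the completeness step together with the sign bookkeeping through the iterated $\varepsilon$-closure: proving that all cascading cancellations are captured by the closure and that the sign eventually attached to the reduced word is independent of the order in which cancellations are performed. This is precisely where the centrality of $-I$ and the uniqueness in Theorem~\ref{thm:uniq} do the essential work.
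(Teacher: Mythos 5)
Your proposal is correct and follows essentially the same route as the paper: the same sign-doubled state set $Q\times\{+,-\}$ with initial states on the $+$ level, the same iterated $\varepsilon$-closure over runs labelled $SS$ or $RRR$ with a sign flip, the same accepting-set convention for $\wt F^{+}$ and $\wt F^{-}$, and the same final reduction to emptiness of intersection of the resulting ordinary regular languages via Theorem~\ref{thm:uniq}. Your added remarks on termination of the closure and on the order-independence of the sign bookkeeping are correct elaborations of points the paper leaves implicit.
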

\setcounter{theorem}{\value{thmold}}

\begin{proof}

We will describe a construction that turns any signed automaton $\A$ over $\Sigma$ into a new signed automaton $\wt{\A}$ such that
\begin{itemize}
\item $\phi(L(\wt{\A}))=\phi(L(\A))$ and

\item for every $M\in \phi(L(\wt{\A}))$, there is a reduced word $w$ such that $M=\phi(w)$ or $M=-\phi(w)$ and $w\in L(\wt{\A})^+$ or $w\in L(\wt{\A})^-$, respectively.
\end{itemize}

Suppose $\A=(\Sigma,Q,I,\Delta,F^+,F^-)$, then $\wt{\A}$ is defined as follows $\wt{\A}=(\Sigma,\wt{Q},\wt{I},\wt{\Delta},\wt{F}^+,\wt{F}^-)$, where
\begin{itemize}
\item $\wt{Q}=Q\times\{+,-\}$,
\item $\wt{I}=I\times\{+\}$,
\item $\wt{F}^+=\{(q,+) :\, q\in F^+\}\cup \{(q,-) :\, q\in F^-\}$,
\item $\wt{F}^-=\{(q,+) :\, q\in F^-\}\cup \{(q,-) :\, q\in F^+\}$,
\end{itemize}
To define $\wt{\Delta}$, we first set $\wt{\Delta} = \Delta$. Then for each transition $(q_1,X,q_2)\in \Delta$, we add the following two transition into $\wt{\Delta}$: $((q_1,+),X,(q_2,+))$ and $((q_1,-),X,(q_2,-))$.

Furthermore, we iteratively add new $\varepsilon$-transitions to $\wt{\Delta}$ as follows: if there is a run of $\wt{\A}$ from $(q_1,s_1)$ to $(q_2,s_2)$ labelled by $SS$ or $RRR$, then we add an $\varepsilon$-transition from $(q_1,s_1)$ to $(q_2,\bar{s}_2)$, where $\bar{s}_2$ is the sign opposite to $s_2$. For instance, if there is a run from $(q_1,+)$ to $(q_2,+)$ labelled by $RRR$, then we add an $\varepsilon$-transition from $(q_1,+)$ to $(q_2,-)$ (see Figure~\ref{fig:aut} for an illustration). We continue this process until no new $\varepsilon$-transitions can be added.
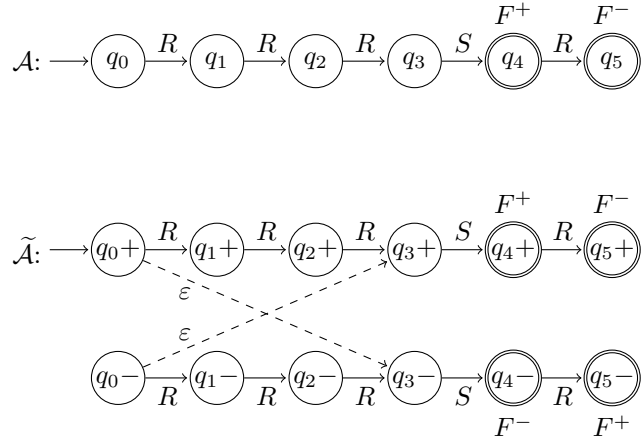
\begin{figure}[h]
\centering
\begin{tikzpicture}[shorten >=1pt,node distance=1.7cm and 1.3cm,on grid,auto] 
\tikzset{every state/.style={inner sep=0pt,minimum size=20pt}}
   \node at (-2.5,0) {$\A$:};
   \node at (-2.5,-2.5) {$\wt{\A}$:};
   \node[state] (q_0) at (-1.3,0)  {$q_0$}; 
   \node[state,] (q_1) [right=of q_0] {$q_1$}; 
   \node[state] (q_2) [right=of q_1] {$q_2$}; 
   \node[state] (q_3) [right=of q_2] {$q_3$}; 
   \node[state,accepting,label={$F^+$}] (q_4) [right=of q_3] {$q_4$}; 
   \node[state,accepting,label={$F^-$}] (q_5) [right=of q_4] {$q_5$}; 
    \path[->] 
    (-2.2,0) edge  (q_0)
    (q_0) edge  node {$R$} (q_1)
    (q_1) edge  node {$R$} (q_2)
    (q_2) edge  node {$R$} (q_3)
    (q_3) edge  node {$S$} (q_4)
    (q_4) edge  node {$R$} (q_5);

   \node[state] (q_0+) at (-1.3,-2.5) {$q_0+$}; 
   \node[state] (q_1+) [right=of q_0+] {$q_1+$}; 
   \node[state] (q_2+) [right=of q_1+] {$q_2+$}; 
   \node[state] (q_3+) [right=of q_2+] {$q_3+$}; 
   \node[state,accepting,label={$F^+$}] (q_4+) [right=of q_3+] {$q_4+$}; 
   \node[state,accepting,label={$F^-$}] (q_5+) [right=of q_4+] {$q_5+$}; 
   \node[state] (q_0-) [below=of q_0+]  {$q_0-$}; 
   \node[state] (q_1-) [right=of q_0-] {$q_1-$}; 
   \node[state] (q_2-) [right=of q_1-] {$q_2-$}; 
   \node[state] (q_3-) [right=of q_2-] {$q_3-$}; 
   \node[state,accepting,label={below:$F^-$}] (q_4-) [right=of q_3-] {$q_4-$}; 
   \node[state,accepting,label={below:$F^+$}] (q_5-) [right=of q_4-] {$q_5-$}; 
    \path[->] 
    (-2.2,-2.5) edge (q_0+)
    (q_0+) edge  node {$R$} (q_1+)
    (q_1+) edge  node {$R$} (q_2+)
    (q_2+) edge  node {$R$} (q_3+)
    (q_3+) edge  node {$S$} (q_4+)
    (q_4+) edge  node {$R$} (q_5+);
    \path[->] 
    (q_0-) edge  node[below] {$R$} (q_1-)
    (q_1-) edge  node[below] {$R$} (q_2-)
    (q_2-) edge  node[below] {$R$} (q_3-)
    (q_3-) edge  node[below] {$S$} (q_4-)
    (q_4-) edge  node[below] {$R$} (q_5-);
    \path[->, dashed] 
    (q_0+) edge  node[pos=0.17, below] {$\varepsilon$} (q_3-);
    \path[->, dashed] 
    (q_0-) edge  node[pos=0.17, above] {$\varepsilon$} (q_3+);
\end{tikzpicture}
\caption{An example of an automaton $\A$ (above) and its corresponding automaton $\wt{\A}$ (below). The final states from $F^+$ and $F^-$ are marked by the labels $F^+$ and $F^-$, respectively.}
\label{fig:aut}
\end{figure}

Note that in $\SL$ we have $\phi(S)^2=\phi(R)^3=-I$, and this is reflected in the change of sign of the end state of a new $\varepsilon$-transition. It is not hard to see that $\wt{\A}$ is indeed the desired automaton.

Let $\A_1$ and $\A_2$ be two finite signed automata such that $L(\A_1)=L_1$ and $L(\A_2)=L_2$. To check whether $\phi(L_1)\cap\phi(L_2)$ is empty or not, we take the automata $\A_1$ and $\A_2$ and construct the new automata $\wt{\A}_1$ and $\wt{\A}_2$ as described above.

Now we can prove that
$\phi(L_1)\cap\phi(L_2)\neq \emptyset$ if and only if $L(\wt{\A}_1)^+\cap L(\wt{\A}_2)^+\neq \emptyset$ or $L(\wt{\A}_1)^-\cap L(\wt{\A}_2)^-\neq \emptyset$.
Indeed, suppose that $M\in \phi(L_1)\cap\phi(L_2)$. By the above construction we have $\phi(L(\wt{\A}_i))= \phi(L_i)$, for $i=1,2$, and there is a reduced word $w\in \Sigma^*$ such that $M=\phi(w)$ or $M=-\phi(w)$ and $w\in L(\wt{\A}_i)^+$ or $w\in L(\wt{\A}_i)^-$, respectively, for both $i=1,2$. In the fist case we have $w\in L(\wt{\A}_1)^+\cap L(\wt{\A}_2)^+$ and in the second case $w\in L(\wt{\A}_1)^-\cap L(\wt{\A}_2)^-$. The implication in the other direction is trivial.

To complete the proof we note that the intersection of regular languages is again regular, and the emptiness problem for regular languages is decidable.
\end{proof}

\section{Proof of Lemma \ref{lem:ind}}\label{sec:ind}

\setcounter{thmold}{\value{theorem}}
\setcounter{theorem}{\value{thm3}}
\begin{lemma}
Let $\mathbf{x}=\begin{bmatrix}x_1\\ x_2\end{bmatrix}$ and $\mathbf{y}=\begin{bmatrix}y_1\\ y_2\end{bmatrix}$ be any vectors from $\Z\times \Z$ such that $\gcd(x_1,x_2)=\gcd(y_1,y_2)=d$. Let $d_1$ and $d_2$ be any integer numbers with $|d_1|=|d_2|=d$ and let $A_1$, $B_1$ and $A_2$, $B_2$ by any matrices from $\SL$ such that $B_i\mathbf{x}=\begin{bmatrix}d_i\\ 0\end{bmatrix}$ and $A_i\begin{bmatrix}d_i\\ 0\end{bmatrix}= \mathbf{y}$, for $i=1,2$. Then
$
\{A_1T^tB_1\ :\ t\in \Z\} = \{A_2T^tB_2\ :\ t\in \Z\}.
$
In other words, the following diagrams define the same set of matrices that map $\mathbf{x}$ to $\mathbf{y}$.
\begin{center}
\begin{tikzpicture}[scale=1.3]
\node at (0,0) (x) {$\mathbf{x}$};
\node at (1,0) (u) {$\left[\!\begin{smallmatrix}d_1\\ 0\end{smallmatrix}\!\right]$};
\node at (2,0) (y) {$\mathbf{y}$};

\draw[->] (x) -- (u) node [above,midway] {$B_1$};
\draw[->] (u) -- (y) node [above,midway] {$A_1$};
\draw[->] (u) .. controls (1.5,-0.9) and (0.5,-0.9) .. node [auto,pos=0.15] {$T^t$} (u);

\node at (3,0) (x1) {$\mathbf{x}$};
\node at (4,0) (u1) {$\left[\!\begin{smallmatrix}d_2\\ 0\end{smallmatrix}\!\right]$};
\node at (5,0) (y1) {$\mathbf{y}$};

\draw[->] (x1) -- (u1) node [above,midway] {$B_2$};
\draw[->] (u1) -- (y1) node [above,midway] {$A_2$};
\draw[->] (u1) .. controls (4.5,-0.9) and (3.5,-0.9) .. node [auto,pos=0.15] {$T^t$} (u1);
\end{tikzpicture}
\end{center}
\end{lemma}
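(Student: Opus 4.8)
The plan is to prove the stronger statement that \emph{each} of the two families coincides with the full solution set $\mathcal{S}=\{M\in\SL : M\mathbf{x}=\mathbf{y}\}$; the asserted equality of the two families is then immediate. Throughout I may assume $d\neq 0$, since $d=0$ forces $\mathbf{x}=\mathbf{y}=\mathbf{0}$ and every claim is trivial. The inclusion $\{A_iT^tB_i : t\in\Z\}\subseteq\mathcal{S}$ is easy: the vector $\begin{bmatrix}d_i\\ 0\end{bmatrix}$ is a fixed point of $T$, so $T^t\begin{bmatrix}d_i\\ 0\end{bmatrix}=\begin{bmatrix}d_i\\ 0\end{bmatrix}$, whence $A_iT^tB_i\mathbf{x}=A_iT^t\begin{bmatrix}d_i\\ 0\end{bmatrix}=A_i\begin{bmatrix}d_i\\ 0\end{bmatrix}=\mathbf{y}$ for every $t\in\Z$.

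The key step is to identify the stabilizer of $\begin{bmatrix}d_i\\ 0\end{bmatrix}$ in $\SL$. Writing a generic $M=\begin{bmatrix}a & b\\ c & e\end{bmatrix}\in\SL$ and imposing $M\begin{bmatrix}d_i\\ 0\end{bmatrix}=\begin{bmatrix}d_i\\ 0\end{bmatrix}$ yields $ad_i=d_i$ and $cd_i=0$; since $d_i\neq 0$ these force $a=1$ and $c=0$, and then the determinant condition $ae-bc=1$ reduces to $e=1$, leaving $b$ free. Hence the stabilizer is exactly $\{T^b : b\in\Z\}$, and — importantly — this conclusion does not depend on the sign of $d_i$.

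For the reverse inclusion $\mathcal{S}\subseteq\{A_iT^tB_i : t\in\Z\}$, take any $M$ with $M\mathbf{x}=\mathbf{y}$. Using $\mathbf{x}=B_i^{-1}\begin{bmatrix}d_i\\ 0\end{bmatrix}$ and $\mathbf{y}=A_i\begin{bmatrix}d_i\\ 0\end{bmatrix}$, I compute $A_i^{-1}MB_i^{-1}\begin{bmatrix}d_i\\ 0\end{bmatrix}=A_i^{-1}M\mathbf{x}=A_i^{-1}\mathbf{y}=\begin{bmatrix}d_i\\ 0\end{bmatrix}$, so $A_i^{-1}MB_i^{-1}$ fixes $\begin{bmatrix}d_i\\ 0\end{bmatrix}$. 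By the stabilizer computation it equals $T^t$ for some $t\in\Z$, i.e. $M=A_iT^tB_i$. Since this argument is valid verbatim for both $i=1$ and $i=2$, both families equal $\mathcal{S}$, and therefore equal each other.

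I do not anticipate a genuine obstacle; the only points needing care are the degenerate case $d=0$, dispatched at the outset, and checking that the stabilizer computation is insensitive to the sign of $d_i$, which it is. A more computational alternative would factor $A_2T^tB_2=A_1(A_1^{-1}A_2)T^t(B_2B_1^{-1})B_1$ and observe that $A_1^{-1}A_2$ and $B_2B_1^{-1}$ are each of the form $\pm T^s$; when $d_2=-d_1$ the two factors of $-I$ cancel, again reindexing the exponent bijectively. The stabilizer argument above is preferable since it avoids this case split on $d_2=\pm d_1$.
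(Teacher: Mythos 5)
Your proof is correct, but it takes a genuinely different route from the paper's. The paper argues element by element: it writes $M=A_1T^{t_1}B_1$ as $A_2(A_2^{-1}A_1)T^{t_1}(B_1B_2^{-1})B_2$, computes that $A_2^{-1}A_1$ and $B_1B_2^{-1}$ are upper triangular with both diagonal entries $1$ (when $d_1=d_2$) or both $-1$ (when $d_1=-d_2$), i.e.\ equal to $\pm T^s$, and then merges the powers of $T$, the two factors of $-I$ cancelling in the second case --- exactly the ``computational alternative'' you sketch and set aside at the end. Your main argument instead identifies the stabilizer of $\left[\begin{smallmatrix}d_i\\ 0\end{smallmatrix}\right]$ in $\SL$ as $\{T^b : b\in\Z\}$ and concludes the stronger fact that each family $\{A_iT^tB_i : t\in\Z\}$ equals the \emph{entire} solution set $\{M\in\SL : M\mathbf{x}=\mathbf{y}\}$; this is cleaner, avoids the case split on the relative sign of $d_1$ and $d_2$, and makes transparent why the parametrization is independent of all choices (it also reproves Corollary~\ref{regmatcor} in passing). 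One caveat: your dismissal of $d=0$ as ``trivial'' is not accurate --- when $\mathbf{x}=\mathbf{y}=\mathbf{0}$ the hypotheses on $A_i,B_i$ become vacuous and the asserted equality of the two families can genuinely fail (take $A_1=B_1=B_2=I$ and $A_2$ any matrix that is not of the form $\pm T^s$), so the lemma must be read as assuming $d\neq 0$. The paper's own proof makes the same tacit assumption when it cancels $d_i$, and $d\geq 1$ in every application, so this does not affect the result, but ``trivial'' should be replaced by an explicit nondegeneracy hypothesis.
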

\setcounter{theorem}{\value{thmold}}

\begin{proof}
Let us define $\F_i=\{A_iT^tB_i\ :\ t\in \Z\}$ for $i=1,2$. We need to show that $\F_1=\F_2$. Suppose that $M\in \F_1$, that is, $M=A_1T^{t_1}B_1$ for some $t_1\in \Z$. We want to show that $M\in \F_2$, that is, $M=A_2T^{t_2}B_2$ for some $t_2\in \Z$. First, let us write $M$ as\\
\null\qquad
$
M=A_2(A_2^{-1}A_1)T^{t_1}(B_1B_2^{-1})B_2.
$\\
Note that
$
A_2^{-1}A_1\begin{bmatrix}d_1\\ 0\end{bmatrix} = \begin{bmatrix}d_2\\ 0\end{bmatrix}$ and
$B_1B_2^{-1}\begin{bmatrix}d_2\\ 0\end{bmatrix} = \begin{bmatrix}d_1\\ 0\end{bmatrix}$.
Suppose
$
A_2^{-1}A_1=\begin{bmatrix}a_1 & a_2\\ a_3 & a_4\end{bmatrix}$ and
$B_1B_2^{-1}=\begin{bmatrix}b_1 & b_2\\ b_3 & b_4\end{bmatrix}$.
Then we have $a_1d_1=d_2$, $a_3d_1=0$ and $b_1d_2=d_1$, $b_3d_2=0$. Hence $a_3=b_3=0$.

Now we need to consider two cases: (1) $d_1=d_2$ and (2) $d_1=-d_2$. In the first case $a_1=b_1=1$ and in the second case $a_1=b_1=-1$. Note that $A_2^{-1}A_1$ and $B_1B_2^{-1}$ are matrices from $\SL$. Therefore, in the first case we must have that\\
\null\qquad$
A_2^{-1}A_1=\begin{bmatrix}1 & a_2\\ 0 & 1\end{bmatrix} \text{ and }\
B_1B_2^{-1}=\begin{bmatrix}1 & b_2\\ 0 & 1\end{bmatrix}
$\\
and in the second case we must have that
$
A_2^{-1}A_1=\begin{bmatrix}-1 & a_2\\ 0 & -1\end{bmatrix} \text{ and }\
B_1B_2^{-1}=\begin{bmatrix}-1 & b_2\\ 0 & -1\end{bmatrix}.
$
So, in the first case we obtain\\
\null\qquad$
M=A_2T^{a_2}T^{t_1}T^{b_2}B_2=A_2T^{a_2+t_1+b_2}B_2,
$\\
and in the second case we have\\
\null\qquad$
M=A_2(-T^{-a_2})T^{t_1}(-T^{-b_2})B_2=A_2T^{-a_2+t_1-b_2}B_2.
$\\
Hence in both cases $M\in \F_2$. Similarly, we can show that if $M\in \F_2$, then $M\in \F_1$.
\end{proof}

\section{Proof of Corollary \ref{regmatcor}} \label{sec:cor}

\setcounter{thmold}{\value{theorem}}
\setcounter{theorem}{\value{thm4}}
\begin{corollary}
The value of the parameter $k$ in Theorem~\ref{regmatthm} is equal to 1.
\end{corollary}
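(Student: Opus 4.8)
The plan is to extract from the geometric picture of Proposition~\ref{prop:uv} a direct description of the \emph{entire} solution set of $M\mathbf{x}=\mathbf{y}$, and then to observe that this description forces the exponent in Theorem~\ref{regmatthm} to range over all of $\Z$. The starting point is that Proposition~\ref{prop:uv}, applied to the very matrices $B$ and $C$ produced by Theorem~\ref{regmatthm}, already yields $C\mathbf{x}=\begin{bmatrix}d\\ 0\end{bmatrix}$ and $B^{-1}\mathbf{y}=\begin{bmatrix}d\\ 0\end{bmatrix}$, where $d=\gcd(x_1,x_2)\neq 0$ because $\mathbf{x}\neq\mathbf{0}$. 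It is worth stressing that Proposition~\ref{prop:uv} is derived from the $k$-form of Theorem~\ref{regmatthm} and does not itself presuppose $k=1$, so invoking it here is not circular.

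First I would take an arbitrary solution $M\in\SL$ of $M\mathbf{x}=\mathbf{y}$ and conjugate it, setting $N=B^{-1}MC^{-1}\in\SL$. Using the two identities above, a one-line computation gives $N\begin{bmatrix}d\\ 0\end{bmatrix}=NC\mathbf{x}=B^{-1}M\mathbf{x}=B^{-1}\mathbf{y}=\begin{bmatrix}d\\ 0\end{bmatrix}$, so $N$ fixes the nonzero vector $\begin{bmatrix}d\\ 0\end{bmatrix}$. Writing $N=\begin{bmatrix}p & q\\ r & s\end{bmatrix}$, the fixing condition forces $pd=d$ and $rd=0$; since $d\neq 0$ this gives $p=1$ and $r=0$, and then $\det N=1$ forces $s=1$. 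Hence $N=T^{q}$ for some $q\in\Z$, so $M=BT^{q}C$. Conversely, every such matrix is a solution, since $T^{q}$ fixes $\begin{bmatrix}d\\ 0\end{bmatrix}$ and therefore $BT^{q}C\mathbf{x}=B\begin{bmatrix}d\\ 0\end{bmatrix}=\mathbf{y}$.

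This establishes that the solution set is exactly $\{BT^{q}C\ :\ q\in\Z\}$, with the exponent running over all of $\Z$. Comparing with the form $\{BT^{kt}C\ :\ t\in\Z\}$ of Theorem~\ref{regmatthm} for the same $B,C$, and using that $T$ has infinite order while $B,C$ are invertible (so $q\mapsto BT^{q}C$ is injective), the two sets can coincide only if $\{kt\ :\ t\in\Z\}=\Z$, i.e.\ $|k|=1$. As replacing $t$ by $-t$ leaves the set unchanged, we may take $k=1$, which is precisely the claim.

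The one step requiring genuine care is identifying the stabiliser of $\begin{bmatrix}d\\ 0\end{bmatrix}$ in $\SL$ with the powers of $T$; this is the short determinant computation above and is where $d\neq 0$ is used. Everything else is a routine transport of Proposition~\ref{prop:uv} through the conjugation by $B$ and $C$, so I do not anticipate any serious obstacle beyond making sure the $B,C$ used are exactly those delivered by Theorem~\ref{regmatthm}.
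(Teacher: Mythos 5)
Your proof is correct, and it pivots on the same key ingredient as the paper's: Proposition~\ref{prop:uv}, which identifies $C\mathbf{x}$ and $B^{-1}\mathbf{y}$ with $\begin{bmatrix}d\\ 0\end{bmatrix}$ for the $B,C$ delivered by Theorem~\ref{regmatthm}. The difference lies in how the nontrivial inclusion is obtained. The paper only observes that $T$ fixes $\begin{bmatrix}d\\ 0\end{bmatrix}$, so every $BT^{t}C$ is a solution, and then invokes the completeness clause of Theorem~\ref{regmatthm} (every solution lies in $\{BT^{kt}C : t\in\Z\}$) to get $\{BT^{t}C : t\in\Z\}\subseteq\{BT^{kt}C : t\in\Z\}$ and hence equality of the two sets. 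You instead conjugate an arbitrary solution $M$ to $N=B^{-1}MC^{-1}$ and compute the stabiliser of $\begin{bmatrix}d\\ 0\end{bmatrix}$ in $\SL$ directly, showing it is exactly $\{T^{q} : q\in\Z\}$; this pins down the solution set as $\{BT^{q}C : q\in\Z\}$ without routing through the theorem's ``all solutions'' statement for that direction, and your closing injectivity argument ($T$ has infinite order, $B,C$ invertible) yields the marginally stronger conclusion that any admissible $k$ must satisfy $|k|=1$, rather than merely that $k=1$ is a valid choice. Both arguments need $d\neq 0$ (guaranteed by $\mathbf{x}\neq\mathbf{0}$), which you correctly isolate as the one delicate point, and neither is circular since Proposition~\ref{prop:uv} is proved from the $k$-form of Theorem~\ref{regmatthm}.
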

\setcounter{theorem}{\value{thmold}}

\begin{proof}
Our proof will rely on Proposition \ref{prop:uv} from Section \ref{sec:geom}. We need to show that
\[
\{BT^{kt}C : t\in \Z\} = \{BT^{t}C : t\in \Z\}.
\]
Note that the inclusion $\{BT^{kt}C : t\in \Z\} \subseteq \{BT^{t}C : t\in \Z\}$ is obvious. On the other hand, by Proposition \ref{prop:uv}, the vectors $\mathbf{u}=C\mathbf{x}$ and $\mathbf{v}=B^{-1}\mathbf{y}$ are equal to each other and have the form $\begin{bmatrix}d\\ 0\end{bmatrix}$. Since $T=\begin{bmatrix} 1 & 1\\ 0 & 1 \end{bmatrix}$, it is easy to check that $T\mathbf{u}=\mathbf{u}=\mathbf{v}$. From this we can conclude that $T^t\mathbf{u}=\mathbf{v}$ and hence $BT^{t}C\mathbf{x}=\mathbf{y}$ for any $t\in \Z$. In other words, matrices of the form $BT^{t}C$ transform $\mathbf{x}$ to $\mathbf{y}$.
However, in Theorem \ref{regmatthm} we proved that all such matrices belong to the set $\{BT^{kt}C\ :\ t\in \Z\}$. Therefore, we obtain the inclusion $\{BT^{t}C\ :\ t\in \Z\} \subseteq \{BT^{kt}C\ :\ t\in \Z\}$, and this proves the corollary.
\end{proof}




\end{document}